\documentclass[11pt,a4paper]{article}

\usepackage[utf8]{inputenc}
\usepackage[T1]{fontenc}

\usepackage{amsmath,amssymb,amsthm,mathtools}
\usepackage{algorithm}
\usepackage{algpseudocode}
\usepackage{booktabs}
\usepackage{graphicx}
\usepackage{subcaption}
\usepackage{xcolor}
\usepackage{hyperref}
\usepackage{cleveref}
\usepackage{tikz}
\usepackage[a4paper,margin=1in]{geometry}

\usetikzlibrary{calc,positioning,arrows.meta}

\theoremstyle{plain}
\newtheorem{theorem}{Theorem}[section]
\newtheorem{lemma}[theorem]{Lemma}

\newtheorem{corollary}[theorem]{Corollary}

\theoremstyle{definition}
\newtheorem{definition}[theorem]{Definition}
\newtheorem{remark}[theorem]{Remark}

\newcommand{\tw}{\operatorname{tw}}
\newcommand{\pw}{\operatorname{pw}}
\newcommand{\TC}{\operatorname{TC}}

\newcommand{\OPT}{\operatorname{OPT}}
\newcommand{\Prob}{\mathbb{P}}

\title{
Exact Greedy Influence Maximization in Linear Time \\ on Bounded-Treewidth Graphs
}

\author{
Matic Požar\\
UP FAMNIT, University of Primorska\\
Koper, Slovenia\\
\texttt{matic.pozar@upr.si}
}

\date{}

\begin{document}

\maketitle


\begin{abstract}
Computing influence spread under the Independent Cascade (IC) model is
\#P-hard in general, and influence maximization is therefore commonly
approached using Monte Carlo simulation or reverse-reachable-set
sampling.

We study IC diffusion on graphs of bounded treewidth. Prior work on
probabilistic reachability and provenance establishes the tractability
of exact probabilistic reachability on bounded-treewidth instances.
We give an explicit IC formulation based on probability distributions
over separator reachability relations, yielding exact influence
evaluation in

$$
    O\!\left(
        n\,2^{O(w^2)}\operatorname{poly}(w)
    \right)
$$

time, for a graph with \(n\) nodes and treewidth \(w\).

Our main contribution is an exact all-marginal-gains algorithm.
We introduce variable artificial source edges whose probabilities
parameterize potential seed selection and show that, at a deterministic
seed set, the partial derivative with respect to each source-edge
probability is exactly the corresponding greedy marginal gain.
Reverse-mode differentiation through the exact inference computation
therefore obtains all node marginal gains simultaneously, with the same
asymptotic complexity as one exact influence evaluation.

This yields an exact implementation of the classical greedy influence
maximization algorithm in

$$
    O\!\left(
        K n\,2^{O(w^2)}\operatorname{poly}(w)
    \right)
$$

time, which is linear in graph size for fixed treewidth $w$ and seed
budget $K$. We further show that the separator-relation representation
has tight
\( 2^{\Theta(w^2)}\)
state complexity within the class of exact context-independent
compositional separator summaries. This tractability of exact greedy
optimization contrasts with the NP-hardness of globally optimal IC
influence maximization already on graphs of treewidth one and
pathwidth two.

Experiments on synthetic bounded-treewidth networks are consistent
with the predicted linear scaling in graph size for fixed width and
show that the runtime is largely insensitive to the numerical
propagation and seed-activation probabilities. In demanding diffusion
regimes, the proposed method substantially outperforms
reverse-reachable-set and optimized Monte Carlo greedy baselines while
computing the classical greedy marginal gains exactly.
\end{abstract}


\section{Introduction}
\label{sec:introduction}

The Influence Maximization (IM) problem~\cite{kempe_im} asks for a set
of at most $K$ seed vertices whose initial activation maximizes the
expected extent of a subsequent diffusion process. One of the most
widely studied diffusion models is the Independent Cascade (IC) model.
Influence maximization under IC is NP-hard, while exactly computing the
expected influence spread of a fixed seed set is itself
\#P-hard~\cite{mia_icphard}.

Consequently, practical IM algorithms generally rely on approximate
influence evaluation or randomized sampling. A classical approach is
Monte Carlo (MC) simulation of the diffusion process. However, greedy
influence maximization requires repeated influence evaluations for many
candidate vertices, making straightforward MC-based optimization
expensive even when the number of evaluations is reduced through lazy
submodular techniques such as CELF~\cite{CELF}.

A major improvement is obtained through reverse-reachable (RR) set
sampling. Algorithms such as TIM/TIM+~\cite{tim} and IMM~\cite{IMM}
provide strong approximation guarantees together with near-linear
theoretical running times. Their practical computational requirements,
however, remain sampling-dependent and can vary substantially across
diffusion regimes and target accuracy levels.

An alternative source of tractability is structural restriction of the
underlying graph. Many graph problems that are intractable in general
become efficiently solvable when the treewidth \cite{cygan2015parameterized} of the input graph is
bounded~\cite{arnborg1991easy,bodlaender1996linear}. We denote the
treewidth by $w$.

Relatively little work has examined the consequences of bounded
treewidth for IC influence maximization. Influence maximization on
trees admits an FPTAS~\cite{FPTAS-tree}; the same work notes that the
approach can be extended to bounded-treewidth graphs, although at a
substantial increase in complexity, without developing this extension
further. More recently, exact IC influence spread has been shown to be
computable in linear time on graphs of bounded pathwidth
~\cite{bounded_pathwidth}.

Closely related results also arise in probabilistic databases and
provenance. Under the live-edge interpretation of IC, every directed
edge can be viewed as an independently present probabilistic tuple, and
the activation probability of a vertex becomes a probabilistic
reachability query. General bounded-treewidth probabilistic-query
machinery therefore provides exact structural inference for such
queries, while recent provenance systems additionally implement shared
all-target reachability compilation on bounded-treewidth
instances~\cite{provsql}. These results suggest that exact IC influence
evaluation on bounded-treewidth graphs should be viewed as an instance
of a broader class of tractable probabilistic-reachability problems.

This paper asks a different question: \emph{what does this structural
tractability of exact influence evaluation imply for influence
maximization itself?} Our main observation is that exact evaluation of
a single influence function can be extended to provide all greedy
marginal gains simultaneously. By introducing variable artificial
source edges and differentiating the resulting exact inference
computation, the marginal gain of every possible new seed is recovered
from a single reverse pass.

The main contributions are as follows:

\begin{enumerate}
\item
We give an explicit exact IC inference formulation for
bounded-treewidth graphs using probability distributions over
transitively closed separator reachability relations.

\item
We show that exact influence evaluation and all vertex activation
marginals can be computed in
\[
    O\!\left(
        n\,2^{O(w^2)}\operatorname{poly}(w)
    \right)
\]
arithmetic operations.

\item
We introduce source-edge variables $\theta_v$ and prove that, for
every $v\notin S$,
\[
    \frac{\partial F}{\partial \theta_v}(\mathbf 1_S)
    =
    \sigma(S\cup\{v\})-\sigma(S),
\]
where \(F\) is a differentiable function that computes the exact IC
diffusion in bounded treewidth graphs.
Reverse-mode differentiation therefore computes all exact greedy
marginal gains simultaneously within the same asymptotic
complexity as one exact influence evaluation.

\item
We obtain an exact implementation of classical greedy IM with
running time
\[
    O\!\left(
        K n\,2^{O(w^2)}\operatorname{poly}(w)
    \right).
\]
Hence, for fixed treewidth $w$ and seed budget $K$, exact greedy IM
is linear in the graph size.

\item
We show that the separator-relation representation has \(2^{\Theta(w^2)}\)
worst-case state complexity within the class of exact
context-independent compositional separator summaries.

\item
We experimentally compare the proposed method with IMM and a
directed NewGreedy/Cohen-style all-gains Monte Carlo baseline,
studying both scaling in graph size and sensitivity to stochastic
diffusion parameters.
\end{enumerate}


\section{Preliminaries and Related Work}
\label{sec:preliminaries}

\subsection{Independent Cascade Model}

Let
\(
    G=(V,E)
\)
be a simple directed graph with \(n=|V|\).
Each edge $e=(u,v)\in E$ is associated with an independent propagation
probability
\(
    p_e\in[0,1],
\)
which represents the probability with which newly active nodes activate their inactive neighbors.

Given a seed set $S\subseteq V$, the Independent Cascade process starts
with all vertices of $S$ active. Then the diffusion proceeds in discrete steps.
Whenever \(u\) becomes active at time \(t\), it gets one independent opportunity to activate each currently inactive out-neighbor \(v\) with probability \(p_{uv}\). Successfully activated neighbors then attempt to activate their own inactive out-neighbors at time \(t+1\). The process terminates when no new activations occur.

What is of interest is not any particular unfolding of the IC, but rather the probability with which each node is activated.
Let
\(
    \sigma(S)
\)
denote the expected number of activated vertices, which equals the sum of activation probabilities of all the nodes.

\subsection{Live-Edge Representation}

The IC model, and more generally triggering models \cite{kempe_im,kempe_UshareIshare}, admits an equivalent live-edge formulation.
Since all the influence attempts are independent, each edge \(e\) can be regarded as live with probability \(p_e\) and dead with probability \(1-p_e\).
A vertex $v$ is activated by seed set $S$ iff it is reachable from
$S$ in the resulting live-edge graph.

Thus
\[
    \Prob(v\text{ active}\mid S)
    =
    \Prob(S\rightsquigarrow v).
\]

This equivalence turns IC inference into probabilistic reachability.

\subsection{Influence Maximization}

The influence maximization problem asks for
\[
    S^\star
    \in
    \arg\max_{S\subseteq V,\ |S|\le K}
    \sigma(S).
\]

For IC, $\sigma$ is monotone and submodular.
Therefore the classical greedy algorithm, which repeatedly selects the
node with maximum marginal gain,
\(
    \Delta(v\mid S)
    =
    \sigma(S\cup\{v\})-\sigma(S),
\)
achieves the approximation guarantee
\[
    \sigma(S_{\mathrm{greedy}})
    \ge
    \left(1-\frac1e\right)
    \sigma(S^\star).
\]

The difficulty is that exact marginal evaluation is itself
computationally expensive in general graphs.

\subsection{Tree Decompositions}

\begin{definition}[Tree decomposition]
A tree decomposition of an undirected graph $G$ is a pair
\[
    (\mathcal{T},\{B_t\}_{t\in V(\mathcal T)})
\]
where each $B_t\subseteq V(G)$ is a bag satisfying:
\begin{enumerate}
    \item every vertex occurs in at least one bag;
    \item every edge has both endpoints in at least one bag;
    \item the bags containing any fixed vertex form a connected
    subtree of $\mathcal T$.
\end{enumerate}
\end{definition}

The width of a decomposition is
\[
    \max_t |B_t|-1,
\]
and the treewidth $\tw(G)$ is the minimum width over all tree
decompositions.

For directed graphs we use the treewidth of the underlying undirected
graph.

\subsection{Relation to Probabilistic Databases}

The live-edge interpretation of the Independent Cascade model is closely related to tuple-independent probabilistic databases. In a tuple-independent database, every input fact is present independently with an associated probability. Hence, representing every directed edge $(u,v)$ as an uncertain tuple whose presence probability is $p_{uv}$ produces exactly the same distribution over possible graphs as the live-edge formulation of the Independent Cascade model. Under this correspondence, the activation probability of a vertex $v$ from a seed set $S$ is the probability of the recursive reachability query $S\rightsquigarrow v$.

Probabilistic query evaluation is generally \#P-hard, but structural restrictions on the underlying data can make it tractable. Amarilli, Bourhis, and Senellart \cite{Provenance} showed that, for fixed logical queries on instances of bounded treewidth, provenance circuits can be constructed in time linear in the size of the instance, yielding tractable exact probabilistic query evaluation on treelike instances. In particular, directed probabilistic reachability can therefore be computed exactly in linear time for fixed treewidth.

This connection is implemented explicitly in the recent ProvSQL \cite{provsql} system, whose bounded-treewidth reachability compiler constructs reachability provenance along a tree decomposition of the probabilistic graph \footnote{\href{https://provsql.org/docs/user/probabilities.html}{provSQL Probabilities documentation page.}}. The current implementation additionally supports an all-target compilation in which the reachability events of all vertices share a single linear-size circuit for fixed treewidth.

In the influence-maximization literature, Nakamura and Nishino \cite{bounded_pathwidth} independently considered the more restrictive bounded-pathwidth setting and showed that all activation probabilities, and hence the complete IC influence spread, can be computed exactly in
\(
O\!\left((m+n)\omega_p^2 2^{\omega_p^2}\right)
\)
time. Our construction can be viewed as a tree-decomposition formulation of the same underlying structural principle: a subgraph can be summarized, with respect to the rest of the graph, by its reachability behavior on a bounded-size separator. Our subsequent contribution is not the tractability of bounded-treewidth probabilistic reachability itself, but the use of this exact inference structure to compute all greedy influence-maximization marginal gains simultaneously.

We use this existing structural inference principle as the foundation
for the influence-maximization results developed below.


\section{Exact IC Inference on Bounded-Treewidth Graphs}
\label{sec:exact-ic}

In this section, the graphs in question are considered to be simple.

\subsection{Separator Reachability Relations}

Consider an oriented tree-decomposition edge
\(
    x\to y
\)
with separator
\(
    C_{xy}=B_x\cap B_y.
\)

For a fixed live-edge realization on the $x$-side of the decomposition
edge, define a relation
\(
    R_{x\to y}
    \subseteq
    C_{xy}\times C_{xy}
\)
by
\(
    (u,v)\in R_{x\to y}
    \iff
    u\rightsquigarrow v
\)
using only vertices and edges on the $x$-side.

\begin{lemma}[Separator sufficiency]
\label{lem:separator-sufficiency}
Let $H$ be a directed graph component whose only intersection with the
remainder of the graph is a separator $C$.
For a fixed realization of $H$, let
\[
    R_H
    =
    \{(u,v)\in C\times C:
      u\rightsquigarrow_H v\}.
\]
Replacing $H$ by directed arcs representing $R_H$ preserves all
reachability relations among vertices outside $H$ and vertices of $C$.
\end{lemma}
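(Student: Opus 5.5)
We fix the setting precisely and then argue by path surgery in both directions. Write the live-edge graph as $L = L_H \cup L_{\bar H}$. Here $L_H$ consists of the live edges with both endpoints in $V(H)$, and $L_{\bar H}$ consists of the remaining live edges. Every edge of $L_{\bar H}$ has no endpoint in $V(H)\setminus C$; this is exactly the hypothesis that $H$ meets the rest of the graph only in $C$. We assume edges with both endpoints in $C$ are assigned to exactly one side, say $H$; since $(u,v)\in R_H$ whenever $uv$ is live, the choice is harmless. Let $L' = L_{\bar H}\cup \{(u,v): (u,v)\in R_H,\ u\neq v\}$ be the replaced graph. Its vertex set is $U := (V\setminus V(H))\cup C$. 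The claim to prove is that for all $a,b\in U$ we have $a\rightsquigarrow_L b \iff a\rightsquigarrow_{L'} b$.

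For $(\Leftarrow)$, take a path in $L'$ from $a$ to $b$. Each arc of the path lies either in $L_{\bar H}$, and hence in $L$, or is an arc $(u,v)\in R_H$. By definition of $R_H$, such an arc can be replaced by a $u$–$v$ path in $L_H\subseteq L$. Concatenating gives an $a$–$b$ walk in $L$, hence a path.

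For $(\Rightarrow)$, take a path $P$ in $L$ from $a$ to $b$ and decompose it into maximal segments, each lying entirely in $L_H$ or entirely in $L_{\bar H}$. Now consider a maximal $L_H$-segment running from vertex $u$ to vertex $v$. We claim that $u,v\in C$. Indeed, $u$ is either $a$ itself or the head of an $L_{\bar H}$ edge; in both cases $u\in U$, and since $u\in V(H)$ we get $u\in C$. The same argument applies symmetrically to $v$. Hence $(u,v)\in R_H$, so the segment can be replaced by the single arc $(u,v)$ of $L'$, or dropped entirely if $u=v$. The $L_{\bar H}$-segments are kept unchanged. The result is an $a$–$b$ walk in $L'$.

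The main point of care is the boundary argument that the endpoints of each $H$-segment lie in $C$. It relies on the separator hypothesis, which guarantees that no edge outside $H$ touches $V(H)\setminus C$, and on the endpoints $a,b$ being in $U$. Because of this, the statement is restricted to vertices outside $H$ together with $C$. It also means no closure property of $R_H$ is needed, since each segment is replaced individually.
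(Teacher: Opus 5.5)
Your proof is correct and follows the paper's own argument. In the forward direction you split a path into maximal excursions through $H$ and replace each excursion by its relation arc; in the reverse direction you expand each relation arc into its witnessing path in $H$. You also make explicit a step the paper leaves implicit: each excursion starts and ends in $C$, because its endpoints are either $a$, $b$, or endpoints of outside edges, and outside edges cannot touch $V(H)\setminus C$.
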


\begin{proof}
Any path entering $H$ can be decomposed into maximal excursions from a
separator vertex $u\in C$ to a separator vertex $v\in C$.
Each such excursion can be replaced by the relation arc $(u,v)$.

Conversely, replacing \(H\) by \(R_H\) introduces no new reachability. Every relation arc \((u,v)\in R_H\) exists only because there is a directed path from \(u\) to \(v\) inside \(H\). Hence any path in the compressed graph can be converted into a path in the original graph by replacing each relation arc by its witnessing path in \(H\).
\end{proof}

\subsection{Probabilistic Messages}
\label{subsec:prob-messages}

Because the IC live edges are random, a directed decomposition message
is a probability distribution over separator reachability relations.
For a directed decomposition edge $x\to y$, let
\(C_{xy}=B_x\cap B_y\)
be the corresponding separator. We define
$$
    M_{x\to y}(R)
    =
    \Prob(R_{x\to y}=R),
$$
where $R_{x\to y}$ is the reachability relation induced on $C_{xy}$
by the part of the graph represented on the $x$-side of the
decomposition edge ${x,y}$.

For a bag $x$, let
\(P_x^{\mathrm{loc}}(R)\)
denote the probability that the uncertain edges owned locally by $x$
induce the reachability relation
\(R\subseteq B_x\times B_x.\)

\paragraph{Unique ownership of uncertain edges.}
Every uncertain graph edge is assigned to exactly one bag containing
both of its endpoints, and the corresponding live-edge variable is
sampled only at this owning bag. This convention is necessary because
the same graph edge may occur in several bags of a tree decomposition.
Sampling it independently at every such occurrence would incorrectly
replace one IC activation attempt by several independent random events.

Unique ownership partitions the uncertain edge variables among the
bags. Consequently, after removing a decomposition node $x$, the edge
variables owned in distinct incident branches are disjoint from one
another and from those owned locally by $x$. Since IC live-edge
variables are independent, the random relation produced by each branch
is therefore independent of the local relation and of the relations
produced by the other branches before they are combined at $x$.

This independence is what allows the probability of a fixed tuple of
relation states to factor as a product.

\paragraph{Message recursion.}
Consider a directed decomposition edge $x\to y$, and let

$$
    z_1,\ldots,z_d
    =
    N_{\mathcal T}(x)\setminus\{y\}
$$
be the neighbors of \(x\) other than \(y\).

Suppose the local state is $R_x^{\mathrm{loc}}$, and the incoming
branches induce states
\(R_1,\ldots,R_d.\)
For this fixed tuple, define the outgoing separator relation by their transitive closure restricted to \(C_{xy}\times C_{xy}\) as
$$
    \Psi_{x\to y}
    (R_x^{\mathrm{loc}},R_1,\ldots,R_d)
    =
    \left.
    \TC\left(
        R_x^{\mathrm{loc}}
        \cup
        R_1
        \cup\cdots\cup
        R_d
    \right)
    \right|_{C_{xy}\times C_{xy}}.
$$
Note that the relations are aggregated with the \(\TC\) operation, since the individual reachability information of each 
relation may imply new reachability states not present in their union.

The probability of this tuple is

$$
    P_x^{\mathrm{loc}}(R_x^{\mathrm{loc}})
    \prod_{i=1}^{d}
    M_{z_i\to x}(R_i),
$$

because the corresponding edge-variable sets are disjoint and
independent. Here \(P_x^{\mathrm{loc}}(R_x^{\mathrm{loc}})\) denotes the locally computed probability of
relation \(R_x^{\mathrm{loc}}\).

Different tuples may produce the same outgoing relation. Their
probability masses are therefore accumulated, giving the recursion

$$
\boxed{
M_{x\to y}(R)
=
\sum_{\substack{
R_x^{\mathrm{loc}},R_1,\ldots,R_d:\\
\Psi_{x\to y}
(R_x^{\mathrm{loc}},R_1,\ldots,R_d)=R
}}
P_x^{\mathrm{loc}}(R_x^{\mathrm{loc}})
\prod_{i=1}^{d}
M_{z_i\to x}(R_i).
}
$$

Thus the relation operation
$\Psi_{x\to y}$ determines which outgoing state receives each
contribution, while the product of the local and incoming message
probabilities determines the mass of that contribution.

\begin{lemma}[Exactness of the message recursion]
\label{lem:message-correctness}
Consider a directed decomposition edge $x\to y$.
Remove the decomposition edge ${x,y}$, and let
$\mathcal T_{x\to y}$ denote the component containing $x$.
Assume that every uncertain graph edge has exactly one owning bag and
is sampled only at that bag.

Then, for every relation
\(R\subseteq C_{xy}\times C_{xy},\)
the message recursion computes exactly the probability that the edges
whose owners lie in $\mathcal T_{x\to y}$ induce relation $R$ on
$C_{xy}$:

$$
    M_{x\to y}(R)
    =
    \Prob(R_{x\to y}=R).
$$

\end{lemma}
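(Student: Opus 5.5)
The proof goes by structural induction on the size of the subtree $\mathcal T_{x\to y}$, which is the natural order in which the message recursion is evaluated. Fix the directed edge $x\to y$ with neighbours $z_1,\dots,z_d$ of $x$ other than $y$. Removing $x$ from $\mathcal T_{x\to y}$ leaves the subtrees $\mathcal T_{z_i\to x}$, each strictly smaller. By the induction hypothesis, $M_{z_i\to x}(R_i)=\Prob(R_{z_i\to x}=R_i)$ for every $R_i$, where $R_{z_i\to x}$ is the reachability relation on $C_{z_ix}$ induced by the edges owned in $\mathcal T_{z_i\to x}$. The base case is a leaf $x$ ($d=0$). There the recursion reduces to summing $P_x^{\mathrm{loc}}$ over the local relations whose restriction to $C_{xy}$ equals $R$. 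This is correct because the local relation on $B_x$ is already transitively closed and its restriction is the induced relation on $C_{xy}$.

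The inductive step has two parts.

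\emph{Independence.} By unique ownership, the edge variables owned at $x$ and in the subtrees $\mathcal T_{z_1\to x},\dots,\mathcal T_{z_d\to x}$ form a partition of the edges owned in $\mathcal T_{x\to y}$. Independence of IC live-edge variables then gives
\[
\Prob\bigl(R_x^{\mathrm{loc}}=R_0,\;R_{z_1\to x}=R_1,\dots,R_{z_d\to x}=R_d\bigr)
= P_x^{\mathrm{loc}}(R_0)\prod_{i=1}^d M_{z_i\to x}(R_i).
\]

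\emph{Deterministic correctness of $\Psi_{x\to y}$.} For each fixed realization, the relation $R_{x\to y}$ is a deterministic function of the tuple $(R_0,R_1,\dots,R_d)$, and that function is exactly $\Psi_{x\to y}$. Granting this, partition the joint event space by the tuple value. The law of total probability then gives $\Prob(R_{x\to y}=R)$ as the sum of the above products over all tuples with $\Psi_{x\to y}(\cdot)=R$, which is the boxed recursion.

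The main obstacle is the deterministic identity
\[
R_{x\to y}=\TC\bigl(R_0\cup R_1\cup\dots\cup R_d\bigr)\big|_{C_{xy}\times C_{xy}}.
\]
I will prove it by applying \Cref{lem:separator-sufficiency} repeatedly. Let $H_i$ be the subgraph formed by the vertices and owned edges of $\mathcal T_{z_i\to x}$. By the running-intersection property of tree decompositions, $H_i$ meets the rest of the $x$-side only in $C_{z_ix}\subseteq B_x$. Every edge owned in $\mathcal T_{z_i\to x}$ lies within a bag of that subtree, so it cannot leave it except through this separator.

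Applying the lemma to each $H_i$ in turn replaces $H_i$ by arcs representing $R_i$ while preserving reachability among the vertices of $B_x$. After all replacements, the $x$-side graph is compressed to a graph on $B_x$ whose arcs are the locally owned edges together with $R_1\cup\dots\cup R_d$. Reachability in this graph is the transitive closure of $R_0\cup R_1\cup\dots\cup R_d$, since $R_0$ already encodes all local paths.

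Two points need care here. First, paths may alternate between branches through vertices of $B_x$; the transitive closure handles this. Second, the endpoints $u,v\in C_{xy}\subseteq B_x$ survive the compression, so restricting the closure to $C_{xy}\times C_{xy}$ yields exactly $R_{x\to y}$.
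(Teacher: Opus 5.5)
Your proof is correct and follows essentially the same route as the paper. Both argue by induction on the number of bags in $\mathcal T_{x\to y}$, use unique edge ownership to factor the tuple probability, invoke the separator-sufficiency lemma to identify the outgoing relation with $\Psi_{x\to y}$, and sum over the tuples with image $R$. Your running-intersection argument that each branch meets the rest of the $x$-side only in $C_{z_ix}$, and your remark that transitive closure handles paths alternating between branches, make explicit details the paper leaves implicit.
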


\begin{proof}
We prove the claim by induction on the number of bags in
$\mathcal T_{x\to y}$.

Let
\(
    z_1,\ldots,z_d
    =
    N_{\mathcal T}(x)\setminus\{y\}.
\)
By the induction hypothesis, each incoming message
\(M_{z_i\to x}\)
is the exact distribution of the separator relation induced by the
corresponding branch.

Fix one realization of the uncertain edges on the $x$-side, and let
$R_x^{\mathrm{loc}},R_1,\ldots,R_d$ be the resulting local and
incoming relation states. By
Lemma~\ref{lem:separator-sufficiency}, each incoming branch may be
replaced by its separator relation without changing reachability
outside that branch. Hence the relation induced on the outgoing
separator is exactly

$$
    \Psi_{x\to y}
    (R_x^{\mathrm{loc}},R_1,\ldots,R_d).
$$

By unique edge ownership, the local edge variables and those belonging
to the incoming branches are disjoint. Their relation states are
therefore independent, so the probability of the fixed tuple is

$$
    P_x^{\mathrm{loc}}(R_x^{\mathrm{loc}})
    \prod_{i=1}^{d}
    M_{z_i\to x}(R_i).
$$

The recursion sums these probabilities over exactly those tuples whose
image under $\Psi_{x\to y}$ is $R$. Since these tuples partition the
live-edge realizations inducing outgoing relation $R$, the resulting
sum is precisely

$$
    \Prob(R_{x\to y}=R).
$$

For the base case, $x$ has no neighbor other than $y$, so the outgoing
message is obtained directly from the local relation distribution by
transitive closure and restriction to $C_{xy}$.
\end{proof}

\subsection{Upward and Downward Message Passing}
\label{subsec:two-sweeps}

Lemma~\ref{lem:message-correctness} gives the exact recursion for a
directed decomposition message $M_{x\to y}$, assuming that all incoming
messages
\[
    M_{z\to x},
    \qquad
    z\in N_{\mathcal T}(x)\setminus\{y\},
\]
are already available.
To recover reachability information from the entire graph at every bag,
we therefore compute the message in both directions across every
decomposition edge.

Root the decomposition tree $\mathcal T$ at an arbitrary bag $r$.
The messages are evaluated in two passes.

\paragraph{Upward sweep.}
Process the bags from the leaves toward the root.
For every non-root bag $x$ with parent $y$, compute
\[
    M_{x\to y}
\]
using the recursion of Lemma~\ref{lem:message-correctness}.
All required incoming messages $M_{z\to x}$ come from children of $x$
and have therefore already been computed.
For a leaf, there are no incoming branch messages and the message is
obtained directly from the local relation distribution.

\paragraph{Downward sweep.}
Process the bags from the root toward the leaves.
For every edge in which $x$ is the parent of $y$, compute
\[
    M_{x\to y}
\]
using the same recursion.
The messages from the other children of $x$ were computed during the
upward sweep, while, if $x\neq r$, the message from the parent of $x$
was computed earlier in the downward sweep.
Thus every incoming message required for $M_{x\to y}$ is available
when the edge is processed.

The two sweeps therefore use the same symmetric update
\[
    M_{x\to y}
    =
    \Phi_{x\to y}
    \left(
        P_x^{\mathrm{loc}},
        \{M_{z\to x}:
        z\in N_{\mathcal T}(x)\setminus\{y\}\}
    \right),
\]
where $\Phi_{x\to y}$ denotes the relation-combination, transitive
closure, restriction, and probability accumulation operation defined
in Section~\ref{subsec:prob-messages}.
The distinction between upward and downward messages is only their
evaluation order.

\begin{lemma}[Exactness of the two-sweep computation]
\label{lem:two-sweeps}
After one upward sweep followed by one downward sweep, both directed
messages associated with every decomposition edge are exact.
\end{lemma}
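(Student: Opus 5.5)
The plan is to reduce the statement to Lemma~\ref{lem:message-correctness}. That lemma says that each application of the update $\Phi_{x\to y}$ returns the exact distribution $\Prob(R_{x\to y}=\cdot)$, provided every incoming message $M_{z\to x}$, $z\in N_{\mathcal T}(x)\setminus\{y\}$, is itself exact. It therefore suffices to show two things: every directed message is computed exactly once during the two sweeps, and, at the moment it is computed, all of its inputs have already been computed exactly. I would phrase this as an induction over the order in which the sweeps process the directed edges, rather than over subtree size. Exactness then propagates along the evaluation order, and Lemma~\ref{lem:message-correctness} supplies the inductive step.

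\textbf{Upward messages.} For the upward sweep, I would prove by induction on the height of $x$ in the rooted tree $\mathcal T$ that $M_{x\to p(x)}$ is exact for every non-root $x$ with parent $p(x)$. The inputs to $M_{x\to p(x)}$ are exactly the messages $M_{c\to x}$ from the children $c$ of $x$, which have smaller height.
\begin{itemize}
  \item If $x$ is a leaf, the message depends only on $P_x^{\mathrm{loc}}$, which is exact by definition. This is the base case of Lemma~\ref{lem:message-correctness}.
  \item Otherwise, the induction hypothesis together with Lemma~\ref{lem:message-correctness} gives exactness.
\end{itemize}
I would also note that the component $\mathcal T_{x\to p(x)}$ is precisely the subtree rooted at $x$, so the owned-edge sets match the hypotheses of the lemma.

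\textbf{Downward messages.} For the downward sweep, I would prove by induction on the depth of $y$ that $M_{p(y)\to y}$ is exact. Write $x=p(y)$. The inputs to $M_{x\to y}$ split into two kinds:
\begin{itemize}
  \item the messages $M_{c\to x}$ for the children $c\neq y$ of $x$, which are exact by the upward part;
  \item if $x\neq r$, the message $M_{p(x)\to x}$, which has smaller depth and is exact by the induction hypothesis.
\end{itemize}
At the root there is no parent message, so only upward messages are needed. Applying Lemma~\ref{lem:message-correctness} once more finishes the induction.

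Every decomposition edge $\{x,y\}$ has one endpoint as the parent of the other. Hence its two directed messages are exactly one upward and one downward message, and both are exact.

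\textbf{Main obstacle.} The step needing care is checking that Lemma~\ref{lem:message-correctness} applies verbatim to downward messages. That lemma is stated for an arbitrary directed edge with its component $\mathcal T_{x\to y}$, which for a downward edge is the complement of the subtree at $y$ and not a rooted subtree. I would verify two points here. First, the branches at $x$ other than $y$ are exactly the child subtrees of $x$ other than $y$, plus the parent side. Second, unique ownership still partitions the edge variables of $\mathcal T_{x\to y}$ among these branches and $x$. Once the lemma is read as a statement about an arbitrary directed edge, whose proof uses only the induction on component size and not rootedness, both sweeps are simply instances of it and the argument closes.
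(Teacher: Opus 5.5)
Your proposal is correct and follows essentially the same route as the paper. Both arguments reduce to Lemma~\ref{lem:message-correctness} by noting that, in the upward sweep, all child-to-$x$ messages precede $M_{x\to p(x)}$, and in the downward sweep, the sibling messages come from the upward pass while the parent message was computed earlier in the downward pass; your explicit inductions on height and depth, and your check that the lemma applies to the non-rooted components $\mathcal T_{x\to y}$, simply make precise what the paper states informally.
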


\begin{proof}
During the upward sweep, every child-to-parent message is evaluated
only after all messages entering the child from its other neighbors
have been computed. Hence Lemma~\ref{lem:message-correctness} implies
that every upward message is exact.

During the downward sweep, when computing a parent-to-child message
$M_{x\to y}$, every required message entering $x$ from a neighbor
other than $y$ is already exact: child-to-$x$ messages were computed
during the upward sweep, and the parent-to-$x$ message, when one
exists, was computed earlier in the downward sweep.
Applying Lemma~\ref{lem:message-correctness} again therefore makes
every downward message exact.
Thus both orientations of every decomposition edge are computed
exactly.
\end{proof}

\subsubsection{Adding the seed set as an artificial source}

We next convert reachability from a seed set into ordinary
single-source reachability.

Introduce a new vertex
\(\rho\)
that does not belong to the original graph.
For every seed vertex
\(s\in S,\)
add a deterministic directed edge
\(\rho\to s.\)
These artificial edges have probability $1$.
No edge $\rho\to v$ is added for a non-seed vertex $v$.

To obtain a valid tree decomposition of the augmented graph, add
$\rho$ to every bag:
\(
B_x^+
    =
    B_x\cup\{\rho\}.
\)
Because $\rho$ now occurs in every bag, the connectedness condition of
a tree decomposition is automatically satisfied.
Moreover, every artificial edge $\rho\to s$ is contained in every bag
that originally contained $s$, so it can be assigned to one such bag
according to the same unique-ownership convention used for ordinary
edges.

If the original decomposition has width $w$, the augmented
decomposition has width at most \(w+1.\)
Thus the artificial source changes the width only by one.

The purpose of $\rho$ is that, in every fixed live-edge realization,
\(\rho\rightsquigarrow v\)
holds if and only if there exists a seed
\(s\in S\)
such that
\( s\rightsquigarrow v\)
through live original edges.
Indeed, any path starting at $\rho$ must first use one of the
deterministic edges $\rho\to s$, and every seed $s$ is reachable
directly from $\rho$.
Therefore
$$
    \Prob(v\text{ is activated by }S)
    =
    \Prob(\rho\rightsquigarrow v).
$$
The artificial source is retained in all separator and bag relations
throughout the dynamic program.
In particular, after augmentation the separator associated with
${x,y}$ is
\(
C_{xy}^+
    =
    (B_x\cap B_y)\cup\{\rho\}.
\)

\begin{lemma}[Artificial-source equivalence]
\label{lem:source-equivalence}
For every live-edge realization of the original IC graph and every
vertex $v\in V$,

$$
    v\text{ is activated from }S
    \quad\Longleftrightarrow\quad
    \rho\rightsquigarrow v
$$

in the augmented graph.
Consequently,

$$
    \Prob(v\text{ active}\mid S)
    =
    \Prob(\rho\rightsquigarrow v).
$$

\end{lemma}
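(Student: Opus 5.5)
The plan is to prove the realization-wise equivalence first and then obtain the probability identity by averaging over realizations. Fix a live-edge realization $L\subseteq E$ of the original IC graph. The augmented live graph $G^+_L$ has vertex set $V\cup\{\rho\}$ and edge set $L\cup\{(\rho,s): s\in S\}$. The artificial edges have probability $1$, so they are present in every realization. Since $\rho$ is a new vertex and only artificial edges touch it, all of them oriented out of $\rho$, $\rho$ has in-degree zero in $G^+_L$. By the live-edge representation, $v$ is activated from $S$ in this realization iff some $s\in S$ satisfies $s\rightsquigarrow v$ in $(V,L)$. It therefore suffices to show that this holds iff $\rho\rightsquigarrow v$ in $G^+_L$.

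For the forward direction, take $s\in S$ and a directed path $s=u_0,\dots,u_k=v$ in $(V,L)$. Prepending the live edge $(\rho,s)$ gives a directed path from $\rho$ to $v$ in $G^+_L$. This also covers the case $v=s\in S$, where the path has length zero.

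For the reverse direction, take a directed path $\rho=u_0,u_1,\dots,u_k=v$ in $G^+_L$; we may take it simple, and $k\ge1$ because $v\in V$. Its first edge leaves $\rho$, so it is an artificial edge and $u_1=s\in S$. Because $\rho$ has in-degree zero and the path is simple, $\rho$ does not appear again. Hence every later edge $(u_i,u_{i+1})$ with $i\ge1$ has both endpoints in $V$, so it is an original live edge in $L$. Thus $u_1,\dots,u_k$ is a path from $s$ to $v$ in $(V,L)$. The main subtlety in this step is making sure the path cannot pass back through $\rho$ and use a second artificial edge; the in-degree-zero observation rules this out.

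For the probabilistic statement, note that the augmentation adds only deterministic edges, so $G^+$ induces the same distribution over original edge subsets $L$ as the IC model, namely $\prod_{e\in L}p_e\prod_{e\notin L}(1-p_e)$. The two events $\{v\text{ activated from }S\}$ and $\{\rho\rightsquigarrow v\}$ coincide as sets of realizations. Summing this common product measure over that set gives $\Prob(v\text{ active}\mid S)=\Prob(\rho\rightsquigarrow v)$.
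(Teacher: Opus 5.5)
Your proof is correct and follows the same route as the paper: you prepend the deterministic edge $\rho\to s$ for the forward direction, and for the converse you observe that any path from $\rho$ must leave through some $\rho\to s$, with the rest of the path using only live original edges. You are more careful than the paper in two places, noting explicitly that $\rho$ has in-degree zero (so the path cannot re-enter $\rho$) and that the deterministic augmentation leaves the product measure on original edge subsets unchanged, which is what justifies passing from the realization-wise equivalence to the probability identity.
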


\begin{proof}
If $v$ is activated from $S$, then some seed $s\in S$ has a live
directed path to $v$.
Since $\rho\to s$ is deterministic, prefixing this path with
$\rho\to s$ yields a path from $\rho$ to $v$.

Conversely, every path from $\rho$ to an original vertex must leave
$\rho$ through an artificial edge $\rho\to s$ for some $s\in S$.
The remainder of the path consists of live original edges, so $v$ is
reachable from a seed and is therefore activated under the IC
live-edge interpretation.
\end{proof}

\subsubsection{Global Bag Distributions and Activation Marginals}
\label{subsec:Global Bag Distributions and Activation Marginals}

After the upward and downward sweeps, every bag $x$ has an exact
incoming message
\(M_{z\to x}\)
from each neighboring decomposition branch
$z\in N_{\mathcal T}(x)$.
Together with the locally owned edges at $x$, these branches account
for every uncertain edge of the graph exactly once.

For a local relation $R_x^{\mathrm{loc}}$ and incoming branch
relations $\{R_z\}_{z\in N_{\mathcal T}(x)}$, define the resulting
relation on the augmented bag $B_x^+$ by

$$
    \Psi_x
    \left(
        R_x^{\mathrm{loc}},
        \{R_z\}
    \right)
    =
    \left.
    \TC\left(
        R_x^{\mathrm{loc}}
        \cup
        \bigcup_{z\in N_{\mathcal T}(x)}R_z
    \right)
    \right|_{B_x^+\times B_x^+}.
$$

The corresponding global bag distribution is

$$
\boxed{
P_x(R)
=
\sum_{\substack{
R_x^{\mathrm{loc}},\{R_z\}:\\
\Psi_x(R_x^{\mathrm{loc}},\{R_z\})=R
}}
P_x^{\mathrm{loc}}(R_x^{\mathrm{loc}})
\prod_{z\in N_{\mathcal T}(x)}
M_{z\to x}(R_z).
}
$$

By Lemma~\ref{lem:two-sweeps}, all incoming messages are exact, and by
Lemma~\ref{lem:separator-sufficiency}, replacing each incident branch
by its separator relation preserves reachability on $B_x^+$.
Therefore $P_x(R)$ is exactly the probability that the complete
augmented live-edge graph induces relation $R$ on $B_x^+$.

Now let $v$ be an original graph vertex and let $x$ be any bag
containing $v$. Since $\rho$ belongs to every augmented bag,

$$
    \Prob(\rho\rightsquigarrow v)
    =
    \sum_R
    P_x(R)\,
    \mathbf 1[(\rho,v)\in R].
$$

By Lemma~\ref{lem:source-equivalence},

$$
\boxed{
    \pi_v(S)
    =
    \Prob(v\text{ active}\mid S)
    =
    \sum_R
    P_x(R)\,
    \mathbf 1[(\rho,v)\in R].
}
$$

A vertex may occur in several bags, but the value above is independent
of the chosen bag because every $P_x$ is a restriction of the same
global live-edge distribution. We therefore assign each vertex $v$ an
arbitrary home bag $h(v)$ containing it and evaluate $\pi_v(S)$ only
there.

Finally,

$$
    \sigma(S)
    =
    \sum_{v\in V}\pi_v(S).
$$

\subsection{Running Time}

A bag contains at most
\( w+1\)
graph vertices.
Including the artificial source gives $O(w)$ boundary objects (nodes of the original graph).
A directed reachability relation can therefore be represented by
$O(w^2)$ Boolean entries.
Hence there are at most
\(2^{O(w^2)}\)
possible relation states. Here we measure complexity as the number of arithmetic operations.

\begin{theorem}[Exact bounded-treewidth IC inference]
\label{thm:exact-ic}
Given a tree decomposition of width $w$ with \(O(n)\) bags, all IC activation marginals
can be computed exactly in

$$
    O\!\left(
        n\,2^{O(w^2)}\operatorname{poly}(w)
    \right)
$$

arithmetic operations.
For fixed $w$, the running time is linear in $n$.
\end{theorem}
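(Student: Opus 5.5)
The plan is to bound the cost of each primitive operation by $2^{O(w^2)}\operatorname{poly}(w)$ and then count how many primitives are executed, relying on Lemmas~\ref{lem:message-correctness}, \ref{lem:two-sweeps} and \ref{lem:source-equivalence} and on the global bag distribution $P_x$ for correctness. First I will normalize the decomposition. Using the standard conversion, any tree decomposition of width $w$ with $O(n)$ bags can be turned in linear time into one where every bag has at most three tree neighbours. This keeps the width and the $O(n)$ bag count, for example via a nice or binary decomposition obtained by duplicating high-degree bags. I would then add $\rho$ to every bag as in the artificial-source construction, so each augmented bag has $k\le w+2$ elements. A relation on $B_x^+$ is a $k\times k$ Boolean matrix, so the number of states per bag or separator is at most $2^{k^2}=2^{O(w^2)}$. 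Transitive closure and restriction of such a matrix cost $\operatorname{poly}(w)$, for instance $O(k^3)$ via Warshall.

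Next I will bound the local distributions. Each bag owns at most $\binom{k}{2}\cdot 2=O(w^2)$ uncertain edges. So $P_x^{\mathrm{loc}}$ can be tabulated by enumerating the $2^{O(w^2)}$ live/dead patterns of owned edges. For each pattern I multiply $O(w^2)$ factors $p_e$ or $1-p_e$ and add the product to the entry of the induced relation. The cost is $2^{O(w^2)}\operatorname{poly}(w)$ per bag. The unique-ownership assignment itself takes linear time.

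The step I expect to be the main obstacle is the message recursion. Evaluated literally, it sums over all tuples $(R_x^{\mathrm{loc}},R_1,\dots,R_d)$, which costs $2^{O(w^2)(d+1)}$. This is only acceptable because the normalization gives $d\le 2$ for messages and $d\le 3$ for the global bag distribution. So each message, and each $P_x$, is computed with $2^{O(w^2)}$ tuple evaluations, each costing one union, one closure and one restriction, i.e.\ $\operatorname{poly}(w)$ operations plus one multiply-add. As a fallback that avoids relying on normalization, I would fold the branches one at a time. I would keep an intermediate distribution over transitively closed relations on $B_x^+$ and update it by $Q'(\TC(R\cup R_i))\mathrel{+}=Q(R)M_{z_i\to x}(R_i)$. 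This is valid because $\TC(\TC(A)\cup B)=\TC(A\cup B)$, and independence across branches lets the products factor incrementally. Each fold costs $2^{O(w^2)}\cdot 2^{O(w^2)}\operatorname{poly}(w)=2^{O(w^2)}\operatorname{poly}(w)$, and prefix/suffix products would be needed to produce all outgoing messages at a high-degree bag without quadratic overhead in $d$.

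Finally I will count the operations. There are $2(|V(\mathcal T)|-1)=O(n)$ directed messages, each computed once by Lemma~\ref{lem:two-sweeps}, plus $O(n)$ global bag distributions $P_x$. Each vertex $v$ reads $\pi_v(S)$ from its home bag by summing $P_{h(v)}(R)\mathbf 1[(\rho,v)\in R]$ over $2^{O(w^2)}$ states, and $\sigma(S)$ is obtained with $n$ further additions. The total is therefore $O(n)\cdot 2^{O(w^2)}\operatorname{poly}(w)$ arithmetic operations, which is linear in $n$ for fixed $w$. Exactness follows from the cited lemmas and the discussion of $P_x$ preceding the theorem.
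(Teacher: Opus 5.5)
Your proposal is correct and follows the paper's argument: $2^{O(w^2)}$ states per augmented bag, $2^{O(w^2)}\operatorname{poly}(w)$ operations per local table and per pairwise combination (union, closure, restriction), $O(n)$ messages and bag distributions, and marginal extraction at home bags. Your degree normalization, with prefix/suffix folding as the fallback, fills a step the paper's proof skips: the paper asserts that the work at bag $x$ is proportional to $\deg_{\mathcal T}(x)$, but computing all outgoing messages at a high-degree bag naively takes $\Theta(\deg_{\mathcal T}(x)^2)$ pairwise combinations, or $2^{O(w^2)\deg_{\mathcal T}(x)}$ work if the boxed recursion is enumerated literally.
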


\begin{proof}
By Lemmas~\ref{lem:message-correctness} and~\ref{lem:two-sweeps}, one
upward and one downward sweep compute the exact directed message on
every decomposition edge. A width-$w$ bag, after inclusion of the
artificial source, contains $O(w)$ boundary objects and therefore
admits at most
\(2^{O(w^2)}\)
directed reachability states. Combining two relation distributions,
including the required transitive closure and restriction, therefore
takes
\(2^{O(w^2)}\operatorname{poly}(w)\)
arithmetic operations.

Each directed decomposition edge is processed once in each sweep.
Likewise, the global bag distribution $P_x$ from Section~\ref{subsec:Global Bag Distributions and Activation Marginals} is obtained by successively
combining the local relation distribution at $x$ with the incoming
messages
\(\{M_{z\to x}:z\in N_{\mathcal T}(x)\},\)
using the same relation-combination operation as in the message
recursion. The work needed to construct \(P_x^{\text{loc}}\) for each bag \(x\) is \(O(2^{O(w^2)}\operatorname{poly}(w))\) since the size of each bag is \(O(w)\) with \(O(w^2)\) edges.
Hence the work at bag $x$ is proportional to
$\deg_{\mathcal T}(x)$ times
$2^{O(w^2)}\operatorname{poly}(w)$, and since

$$
    \sum_x \deg_{\mathcal T}(x)=O(n)
$$

for a linear-size tree decomposition, constructing all global bag
distributions requires

$$
    O\!\left(
        n\,2^{O(w^2)}\operatorname{poly}(w)
    \right)
$$

operations.

Finally, each graph vertex is evaluated only at its home bag. Computing

$$
    \pi_v(S)
    =
    \sum_R
    P_{h(v)}(R)\,
    \mathbf 1[(\rho,v)\in R]
$$

requires scanning at most $2^{O(w^2)}$ relation states, so extracting
all $n$ activation marginals also costs

$$
    O\!\left(
        n\,2^{O(w^2)}
    \right).
$$

Combining these bounds gives the claimed

$$
    O\!\left(
        n\,2^{O(w^2)}\operatorname{poly}(w)
    \right)
$$

running time.
\end{proof}

\subsection{State-Complexity Lower Bound}

We next show that the quadratic exponent in the separator-relation
representation is unavoidable within the class of
context-independent compositional summaries.

Let $C$ be a separator of size $\Theta(w)$ and partition it into two
sets

$$
    C=L\cup R,
    \qquad
    |L|,|R|=\Theta(w).
$$

We construct a family of deterministic subgraphs whose only interface
with the remainder of the graph is the separator $C$.
These subgraphs should be thought of as possible components lying on
one side of a decomposition edge.

For every subset
\(X\subseteq L\times R,\)
define one such subgraph $H_X$ as follows.
For every pair
\((\ell,r)\in X,\)
include the directed edge
\(\ell\to r\)
with activation probability $1$, and include no edge from $\ell$ to
$r$ when
\((\ell,r)\notin X.\)
No other edges are present in $H_X$.

Thus $H_X$ is deterministic: there is only one live-edge realization.
Moreover, because every edge is directed from $L$ to $R$ and there are
no edges leaving vertices of $R$, no directed path can use more than
one edge. Consequently, the reachability relation induced by $H_X$ on
the separator is exactly

$$
    \ell\rightsquigarrow r
    \quad\Longleftrightarrow\quad
    (\ell,r)\in X.
$$

Hence every subset
\(X\subseteq L\times R\)
corresponds to a distinct possible separator behavior.
Since there are
\(2^{|L||R|}
    =
    2^{\Theta(w^2)}\)
such subsets, there are
\(2^{\Theta(w^2)}\)
distinct reachability behaviors that a separator summary may have to
represent.

\begin{theorem}[Separator-state lower bound]
\label{thm:state-lower-bound}
Any exact context-independent compositional summary of a separator of
size $\Theta(w)$ that preserves reachability under arbitrary external
contexts requires

$$
    2^{\Omega(w^2)}
$$

distinguishable states in the worst case.
\end{theorem}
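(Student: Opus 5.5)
The plan is a fooling-set (distinguishing-context) argument built on the family $\{H_X\}_{X\subseteq L\times R}$ constructed above. First we pin down what "context-independent compositional summary" means. Such a summary assigns to each side-component $H$ a state $\mathrm{sum}(H)$ that depends only on $H$ and the separator $C$, not on what lies on the other side. Exactness then means the following: for every external context $K$ attached to the same separator, the reachability relations among vertices of $K$ and $C$ in $H\cup K$ must be a function of $\mathrm{sum}(H)$ and $K$ alone. In particular, if $\mathrm{sum}(H)=\mathrm{sum}(H')$, then $H\cup K$ and $H'\cup K$ must agree on every reachability query posed in $K\cup C$.

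The key step is to show that any two distinct $X\neq X'$ are distinguished by some context. Pick a pair $(\ell,r)\in X\triangle X'$, say $(\ell,r)\in X\setminus X'$. Take the context $K_{\ell,r}$ that consists of two fresh vertices $a,b$ and deterministic edges $a\to\ell$ and $r\to b$. Then $a\rightsquigarrow b$ in $H_X\cup K$, via $a\to\ell\to r\to b$.

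For the other direction, consider $H_{X'}\cup K$. Every edge of $H_{X'}$ goes from $L$ to $R$, and $K$ adds only the edges out of $a$ and into $b$. So any path from $a$ to $b$ has the form $a\to\ell\to r'\to b$, where the middle step is a single $H_{X'}$ edge. This forces $r'=r$, and that edge is absent because $(\ell,r)\notin X'$. Hence $a\not\rightsquigarrow b$.

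It follows that $\mathrm{sum}(H_X)\neq\mathrm{sum}(H_{X'})$. Alternatively, one can use the context-free criterion that Lemma~\ref{lem:separator-sufficiency} makes available: the induced relation on $C$ already differs, since $(\ell,r)\in R_{H_X}\setminus R_{H_{X'}}$. The explicit context shows, however, that the distinction is observable from outside and is not merely an artifact of the representation.

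Since the map $X\mapsto\mathrm{sum}(H_X)$ is injective, the number of states is at least $2^{|L||R|}$. Choosing $|L|=\lfloor |C|/2\rfloor$ and $|R|=\lceil |C|/2\rceil$ with $|C|=\Theta(w)$ gives $2^{\Omega(w^2)}$. Each $H_X$ uses only the vertices of $C$, so it together with the context fits into a tree decomposition of width $O(w)$; for instance, take a bag equal to $C$ with leaf bags $\{a,\ell\}$ and $\{r,b\}$. The instances are therefore genuinely bounded-treewidth.

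The main obstacle is definitional rather than combinatorial. The theorem's class of summaries is informal, so the proof depends on stating precisely the "exactness under arbitrary external contexts" property used above. I would fix that formalization first, as a Myhill–Nerode-style equivalence: two components are equivalent iff no context distinguishes them. With that in place the theorem reduces to exhibiting $2^{\Omega(w^2)}$ pairwise inequivalent components, which the argument above does. A secondary point is to make sure the distinguishing queries involve only context vertices, or vertices in $C$, so that the summary cannot trivially encode information about $H$ beyond the separator.
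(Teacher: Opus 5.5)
Your proposal is correct and follows the paper's proof. It uses the same family $H_X$ for $X\subseteq L\times R$, the same distinguishing context for a pair $(\ell,r)\in X\triangle X'$ (a fresh source with an edge into $\ell$ and a fresh sink with an edge out of $r$), and the same injectivity count $2^{|L||R|}=2^{\Omega(w^2)}$. Your Myhill--Nerode-style definition of exactness and your check that each instance has a width-$O(w)$ decomposition make explicit what the paper leaves implicit.
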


\begin{proof}
The construction above gives
\(
    2^{|L||R|}
    =
    2^{\Theta(w^2)}
\)
different separator reachability relations.
It remains to show that an exact context-independent summary cannot
merge two distinct such relations into the same state.

Consider two distinct subsets
\(X,Y\subseteq L\times R.\)
Since $X\neq Y$, there exists a pair
\((\ell,r)\in X\triangle Y.\)
Without loss of generality, suppose
\((\ell,r)\in X\) and \((\ell,r)\notin Y.\)

We now construct an external context that distinguishes the two branch
behaviors. Add a new source vertex $\rho$ with the deterministic edge
\(\rho\to\ell,\)
and add a new target vertex $t$ with the deterministic edge
\(r\to t.\)
No other edges are added by the external context.

When the branch $H_X$ is attached, the relation
\( \ell\rightsquigarrow r\)
holds, and therefore
\(\rho\rightsquigarrow t.\)
When $H_Y$ is attached, this reachability does not hold. Since the
construction of $H_Y$ contains only edges from $L$ to $R$, there is no
alternative path from $\ell$ to $r$, and hence
\(\rho\not\rightsquigarrow t.\)
Thus $H_X$ and $H_Y$ produce different reachability answers under the
same external context. Any exact context-independent compositional
summary must therefore assign them different states; otherwise the
remainder of the computation could not distinguish which answer is
correct.

Since this argument applies to every pair of distinct subsets
$X,Y\subseteq L\times R$, all
\(
2^{|L||R|}
    =
    2^{\Theta(w^2)}
\)
separator behaviors must be distinguishable. Hence at least
\(2^{\Omega(w^2)}\)
summary states are required in the worst case.
\end{proof}

\begin{remark}
Theorem \ref{thm:state-lower-bound} is a lower bound for exact
context-independent separator summaries.
It is not claimed to be a universal lower bound for every possible
bounded-treewidth algorithm.
\end{remark}


\section{Exact All-Node Marginal Gains in Linear Time}
\label{sec:marginal-gains}

This section contains the main influence-maximization observation.

\subsection{Variable Seed Edges}

We now replace the deterministic artificial source edges used in the
previous section by probabilistic source edges.

Introduce the artificial source vertex $\rho$ as before. For every
original graph vertex $v\in V$, add an artificial directed edge
\(\rho\to v\)
whose live probability is a variable
\( \theta_v\in[0,1].\)
Let
\(
    \boldsymbol{\theta}
    =
    (\theta_v)_{v\in V}.
\)

The augmented live-edge model therefore contains two types of
independent Bernoulli variables:
\begin{itemize}
\item every original graph edge $e\in E$ is live independently with
its fixed IC probability $p_e$; and
\item every artificial source edge $\rho\to v$ is live independently
with probability $\theta_v$.
\end{itemize}

For any event $A$ in this augmented random graph, write
\( \Prob_{\boldsymbol{\theta}}(A)\)
for its probability under these edge probabilities. In particular,
\(\Prob_{\boldsymbol{\theta}}(\rho\rightsquigarrow u)\)
denotes the probability that $u$ is reachable from $\rho$ when the
original edges are sampled according to their fixed probabilities
${p_e}_{e\in E}$ and the artificial source edges are sampled
according to $\boldsymbol{\theta}$.

Define

$$
    F(\boldsymbol{\theta})
    =
    \sum_{u\in V}
    \Prob_{\boldsymbol{\theta}}
    (\rho\rightsquigarrow u).
$$

For a deterministic seed set $S\subseteq V$, let
\(\boldsymbol{\theta}=\mathbf 1_S,\)
where
$$
    (\mathbf 1_S)_v
    =
    \begin{cases}
        1,&v\in S,\\
        0,&v\notin S.
    \end{cases}
$$
Under this assignment, the artificial edge $\rho\to v$ is
deterministically present exactly when $v\in S$ and deterministically
absent otherwise. Hence the augmented model reduces exactly to the
artificial-source construction for the deterministic seed set $S$.
By Lemma~\ref{lem:source-equivalence},

$$
    \Prob_{\mathbf 1_S}(\rho\rightsquigarrow u)
    =
    \Prob(u\text{ is activated from }S),
$$

and therefore

$$
    \boxed{
    F(\mathbf 1_S)=\sigma(S).
    }
$$

\subsection{Derivative Identity}

\begin{theorem}[Exact marginal gains from derivatives]
\label{thm:gradient-marginals}
Let $S\subseteq V$ and let $v\notin S$. Then

$$
    \boxed{
    \frac{\partial F}{\partial \theta_v}(\mathbf 1_S)
    =
    \sigma(S\cup\{v\})-\sigma(S)
    }.
$$

\end{theorem}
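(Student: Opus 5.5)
The plan is to show that $F$ is affine in each coordinate $\theta_v$ separately, and then read off the partial derivative as a finite difference. The live-edge indicator of $\rho\to v$, call it $Y_v$, is independent of all other edge variables. Hence for every $u\in V$ we can condition on $Y_v$ using the law of total probability:
\[
    \Prob_{\boldsymbol\theta}(\rho\rightsquigarrow u)
    =
    \theta_v\,\Prob_{\boldsymbol\theta}(\rho\rightsquigarrow u\mid Y_v=1)
    +(1-\theta_v)\,\Prob_{\boldsymbol\theta}(\rho\rightsquigarrow u\mid Y_v=0).
\]
Both conditional probabilities depend only on the fixed $p_e$ and on the coordinates $\theta_{v'}$ with $v'\neq v$. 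So, with the other coordinates frozen, each term is affine in $\theta_v$. Summing over $u$ shows that $F$ is affine in $\theta_v$, hence differentiable in that coordinate, with
\[
    \frac{\partial F}{\partial\theta_v}(\boldsymbol\theta)
    =
    \sum_{u\in V}\Bigl(\Prob_{\boldsymbol\theta}(\rho\rightsquigarrow u\mid Y_v=1)-\Prob_{\boldsymbol\theta}(\rho\rightsquigarrow u\mid Y_v=0)\Bigr).
\]
Equivalently, $F$ is a multilinear polynomial in $\boldsymbol\theta$, obtained by expanding the probability as a sum over realizations of the independent Bernoulli variables. The derivative in $\theta_v$ is then $F|_{\theta_v=1}-F|_{\theta_v=0}$.

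Next, evaluate at $\boldsymbol\theta=\mathbf 1_S$ with $v\notin S$. Setting $\theta_v=1$ while keeping the other coordinates at $\mathbf 1_S$ gives exactly the assignment $\mathbf 1_{S\cup\{v\}}$. Setting $\theta_v=0$ gives exactly $\mathbf 1_S$, since $(\mathbf 1_S)_v=0$ already. By the identity $F(\mathbf 1_S)=\sigma(S)$ derived just above from Lemma~\ref{lem:source-equivalence}, applied to both seed sets $S$ and $S\cup\{v\}$, the derivative equals $\sigma(S\cup\{v\})-\sigma(S)$.

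The only delicate point is that $\mathbf 1_S$ lies on the boundary of $[0,1]^V$, so the derivative must be interpreted correctly. Because $F$ is a polynomial, it extends to all of $\mathbb R^V$, and the partial derivative is the ordinary derivative of that polynomial; the affine computation above is valid for every $\theta_v\in\mathbb R$. This also matches what reverse-mode differentiation of the message-passing computation returns. That computation evaluates exactly this polynomial, because each message entry is a polynomial obtained by sums and products of local probabilities, which are themselves multilinear in the edge probabilities by unique ownership. I therefore expect no real obstacle here beyond stating the multilinearity cleanly.
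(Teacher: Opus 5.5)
Your proof is correct and follows the paper's argument. Both condition on the live state of the single edge $\rho\to v$ to see that $F$ is affine in $\theta_v$, then identify the two endpoint values as $F(\mathbf 1_{S\cup\{v\}})=\sigma(S\cup\{v\})$ and $F(\mathbf 1_S)=\sigma(S)$. The paper works only along the line $\mathbf 1_S+t\mathbf e_v$, while you get affineness for arbitrary values of the other coordinates and explicitly handle the boundary point via the polynomial extension, which the paper leaves implicit.
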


\begin{proof}
Fix every coordinate except $\theta_v$ according to the indicator
vector $\mathbf 1_S$, and allow only the $v$-th coordinate to vary.
For $t\in[0,1]$, define

$$
    \boldsymbol{\theta}^{(v)}(t)
    =
    \mathbf 1_S+t\mathbf e_v,
$$

where $\mathbf e_v$ is the $v$-th standard basis vector.
Since $v\notin S$, this means

$$
    \theta_u^{(v)}(t)
    =
    \begin{cases}
        1,&u\in S,\\
        t,&u=v,\\
        0,&u\notin S\cup\{v\}.
    \end{cases}
$$

Under this assignment, the artificial source edges corresponding to
vertices in $S$ are deterministically live, the source edges
corresponding to vertices outside $S\cup{v}$ are deterministically
absent, and the single remaining artificial edge
\(\rho\to v\)
is live with probability $t$.

Conditioning on the state of this edge gives
$$
    F(\boldsymbol{\theta}^{(v)}(t))
    =
    (1-t)\sigma(S)
    +
    t\,\sigma(S\cup\{v\}).
$$

Indeed, when $\rho\to v$ is absent, the deterministic seed set is
exactly $S$, whereas when it is present, the deterministic seed set is
$S\cup{v}$.

Differentiating with respect to $t$ yields

$$
    \frac{d}{dt}
    F(\boldsymbol{\theta}^{(v)}(t))
    =
    \sigma(S\cup\{v\})-\sigma(S).
$$

Since
\(\boldsymbol{\theta}^{(v)}(0)=\mathbf 1_S,\)
the derivative on the left at $t=0$ is precisely the partial derivative
of $F$ with respect to $\theta_v$ evaluated at $\mathbf 1_S$.
Therefore

$$
    \frac{\partial F}{\partial \theta_v}(\mathbf 1_S)
    =
    \sigma(S\cup\{v\})-\sigma(S).
$$

\end{proof}

Consequently, for every $v\notin S$,
\(\left[\nabla F(\mathbf 1_S)\right]_v\)
is exactly the greedy marginal gain of adding $v$ to $S$.
Thus the gradient
\(\nabla F(\mathbf 1_S)\)
contains all candidate marginal gains simultaneously.

\subsection{Reverse-Mode Differentiation}

The bounded-treewidth inference procedure described above can be viewed
as an acyclic arithmetic computation whose inputs are the artificial
source-edge probabilities
\(
    \boldsymbol{\theta}
    =
    (\theta_v)_{v\in V}
\)
and whose scalar output is
\(F(\boldsymbol{\theta}).\)

Once the graph, tree decomposition, edge ownership, and relation-state
transitions have been fixed, all remaining numerical computations
consist of additions and multiplications of probability masses.
For example, combining two relation states with masses $a$ and $b$
produces a contribution
\(ab\)
to the relation state determined by their union and transitive closure.
The transitive-closure computation itself is discrete: it determines
\emph{which} output relation receives the contribution, but introduces
no additional dependence on $\boldsymbol{\theta}$.

Consequently, after the relation transitions have been compiled, the
entire inference algorithm is an arithmetic directed acyclic graph.
Let $L$ denote the number of arithmetic operations in this computation.
A forward evaluation computes
\(F(\boldsymbol{\theta})\)
in
\(O(L)\)
time.

This interpretation also gives a direct implementation-level view of
the differentiation. Suppose, for example, that the same dynamic
program is implemented using an automatic-differentiation framework
such as PyTorch \cite{pytorch}, with the vector
\(\boldsymbol{\theta}\)
declared to require gradients. The forward execution constructs the
same computation graph used to evaluate the exact influence. Calling
the reverse-mode operation on the scalar output, conceptually
\(F.\mathrm{backward()},\)
then propagates derivatives backward through all additions and
multiplications and returns

$$
    \theta_v.\mathrm{grad}
    =
    \frac{\partial F}{\partial\theta_v}
$$

for every vertex $v$ simultaneously.

This is precisely reverse-mode automatic differentiation applied to
the arithmetic circuit defined by the tree-decomposition dynamic
program. Since the circuit has one scalar output, reverse mode visits
each arithmetic operation only a constant number of times. Therefore
all $n$ partial derivatives

$$
    \frac{\partial F}{\partial\theta_1},
    \ldots,
    \frac{\partial F}{\partial\theta_n}
$$

are obtained in
\( O(L)\)
additional time, rather than requiring $n$ separate evaluations of the
inference algorithm.

One implementation detail is important at Boolean seed assignments.
For an artificial edge $\rho\to v$, a probability mass $p$ is split
into \(p(1-\theta_v)\) and \(p\theta_v\)
for the absent and present cases, respectively. Both arithmetic
branches must remain in the computation even when
$\theta_v\in{0,1}$. Numerically pruning the zero-mass branch would
remove the corresponding derivative path and could therefore destroy
the marginal-gain information. Retaining both branches preserves the
correct derivative at \(\boldsymbol{\theta}=\mathbf 1_S.\)

\begin{theorem}[All exact marginal gains]
\label{thm:all-gains}
Given a width-$w$ tree decomposition of the original graph and a
current seed set $S$, the exact marginal gains

$$
    \Delta(v\mid S)
    =
    \sigma(S\cup\{v\})-\sigma(S)
$$

for all $v\in V\setminus S$ can be computed simultaneously in

$$
    O\!\left(
        n\,2^{O(w^2)}\operatorname{poly}(w)
    \right)
$$

time.
\end{theorem}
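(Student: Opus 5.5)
The plan is to combine Theorem~\ref{thm:gradient-marginals} with the complexity bound of Theorem~\ref{thm:exact-ic} and the cheap gradient principle of reverse-mode differentiation. First, we check that the dynamic program of Section~\ref{sec:exact-ic} still applies once the deterministic seed edges are replaced by the variable source edges $\rho\to v$ with probabilities $\theta_v$. Each artificial edge $\rho\to v$ is assigned to a single bag containing $v$; such a bag also contains $\rho$ because $\rho$ lies in every augmented bag. Unique ownership therefore still holds and the width increases by at most one. The local distributions $P_x^{\mathrm{loc}}$ are computed by enumerating the live/dead states of the locally owned edges. Lemmas~\ref{lem:message-correctness} and~\ref{lem:two-sweeps}, together with the marginal formula of Section~\ref{subsec:Global Bag Distributions and Activation Marginals}, then compute $F(\boldsymbol\theta)=\sum_u\Prob_{\boldsymbol\theta}(\rho\rightsquigarrow u)$ exactly for every $\boldsymbol\theta\in[0,1]^V$, not only at indicator vectors. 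The proofs of these lemmas use only independence and disjoint ownership of the edge variables, never their specific values.

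The main point needing care is the number of edges owned by a single bag, since it affects the cost of building $P_x^{\mathrm{loc}}$. A bag could own many artificial edges, up to one for each vertex whose home bag it is. To avoid this, we assume the decomposition is normalized: we can first transform it into a nice decomposition with $O(n)$ bags, and each vertex $v$ has a bag where it is introduced, so we assign $\rho\to v$ there. Even without normalization, each bag owns at most $w+1$ artificial edges, because it has at most $w+1$ original vertices. Together with the $O(w^2)$ original edges, enumerating the local live/dead patterns costs $2^{O(w^2)}\operatorname{poly}(w)$ per bag. This matches the per-bag bound in the proof of Theorem~\ref{thm:exact-ic}, so the forward computation still has $L=O(n\,2^{O(w^2)}\operatorname{poly}(w))$ arithmetic operations. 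This bound holds at any $\boldsymbol\theta$, including $\mathbf 1_S$, because we keep both branches $p(1-\theta_v)$ and $p\theta_v$ and do not prune zero-mass branches.

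Next, we view the forward computation as an arithmetic circuit. The relation transitions (union, transitive closure, restriction) are fixed combinatorial choices that determine the structure of the circuit, while the numerical gates are only additions and multiplications. Its inputs are $\boldsymbol\theta$ and the fixed $p_e$, and it outputs $F(\boldsymbol\theta)$ with $L$ gates of fan-in two. Because the circuit computes $F$ exactly as a polynomial in $\boldsymbol\theta$, the derivative obtained by reverse-mode differentiation of the circuit equals $\partial F/\partial\theta_v$. The Baur--Strassen cheap gradient principle then shows that all partial derivatives at $\mathbf 1_S$ are obtained in $O(L)$ additional operations. The argument is one backward pass in reverse topological order, in which each gate propagates its adjoint to its two inputs at constant cost.

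Finally, we apply Theorem~\ref{thm:gradient-marginals}: for every $v\notin S$, $\partial F/\partial\theta_v(\mathbf 1_S)=\sigma(S\cup\{v\})-\sigma(S)=\Delta(v\mid S)$. The forward pass and the backward pass together therefore return all marginal gains in $O(L)=O(n\,2^{O(w^2)}\operatorname{poly}(w))$ time. The step we expect to need the most care is the ownership and cost accounting for the $n$ artificial edges, which we handle as described above by bounding the number of artificial edges per bag by $w+1$.
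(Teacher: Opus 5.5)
Your proposal is correct and follows essentially the same route as the paper: you augment the graph with $\rho$ and the variable source edges, invoke Theorem~\ref{thm:exact-ic} on the width-$(w+1)$ decomposition to bound the circuit size $L$, apply one reverse-mode pass in $O(L)$, and read off the gains via Theorem~\ref{thm:gradient-marginals}. Your explicit check that each bag owns at most $w+1$ artificial edges, so that $P_x^{\mathrm{loc}}$ still costs $2^{O(w^2)}\operatorname{poly}(w)$, fills in a detail the paper leaves implicit; the normalization detour is unnecessary, and a nice decomposition in fact has $O(wn)$ rather than $O(n)$ bags, though this is harmless here.
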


\begin{proof}
Augment the graph with the artificial source $\rho$ and the variable
source edges \(\rho\to v,\) \( v\in V,\)
with probabilities $\theta_v$.
Adding $\rho$ to every bag increases the decomposition width from $w$
to at most $w+1$, and therefore does not change the asymptotic
width dependence:

$$
    2^{O((w+1)^2)}\operatorname{poly}(w+1)
    =
    2^{O(w^2)}\operatorname{poly}(w).
$$

By Theorem~\ref{thm:exact-ic}, a forward evaluation of
\(F(\boldsymbol{\theta})\)
on this augmented decomposition requires
$$
    O\!\left(
        n\,2^{O(w^2)}\operatorname{poly}(w)
    \right)
$$
arithmetic operations. Let $L$ denote the size of the corresponding
arithmetic circuit, so that
$$
    L
    =
    O\!\left(
        n\,2^{O(w^2)}\operatorname{poly}(w)
    \right).
$$
Evaluate this circuit at \(\boldsymbol{\theta}=\mathbf 1_S.\)
Since $F$ is scalar, reverse-mode differentiation computes the complete
gradient
\( \nabla F(\mathbf 1_S)\)
by one backward traversal of the same circuit, requiring only
$O(L)$ additional arithmetic operations.

By Theorem~\ref{thm:gradient-marginals}, for every $v\notin S$,

$$
    \frac{\partial F}{\partial\theta_v}(\mathbf 1_S)
    =
    \sigma(S\cup\{v\})-\sigma(S)
    =
    \Delta(v\mid S).
$$

Hence the single forward evaluation followed by a single reverse pass
produces all exact candidate marginal gains simultaneously, with total
running time

$$
    O\!\left(
        n\,2^{O(w^2)}\operatorname{poly}(w)
    \right).
$$

\end{proof}

Thus computing all $n-|S|$ exact marginal gains has the same
asymptotic complexity as one exact influence evaluation, up to the
constant-factor overhead of the reverse pass. In particular, the
algorithm does not perform a separate influence computation for each
candidate vertex.


\section{Exact Greedy Influence Maximization}
\label{sec:greedy}

The previous result immediately yields an exact implementation of the
classical greedy algorithm.

\begin{algorithm}[t]
\caption{Exact Greedy Influence Maximization on Bounded-Treewidth Graphs}
\label{alg:exact-greedy}
\begin{algorithmic}[1]
\Require Graph $G=(V,E)$, edge probabilities ${p_e}_{e\in E}$,
tree decomposition $\mathcal T$, seed budget $K$
\Ensure Seed set $S$

\State Introduce the artificial source $\rho$ and add $\rho$ to every bag
\State For each $v\in V$, add the variable source edge \(\rho\to v\) with influence probability \(\theta_v\)

\State Assign every original and artificial edge to one owning bag
\State  Compile the bag and separator reachability states, together with the combination, transitive-closure, and restriction maps described in Section~\ref{subsec:prob-messages} and Lemma~\ref{lem:message-correctness}.

\State $S\gets\emptyset$

\For{$i=1,\ldots,K$}
\State Set
\(        \boldsymbol{\theta}\gets\mathbf 1_S
   \)

\State Compute the local relation distributions for the current
$\boldsymbol{\theta}$

\State Perform the upward and downward sweeps to compute all exact
directed messages $M_{x\to y}$

\State Combine the incoming messages at each bag and evaluate
\[
    F(\boldsymbol{\theta})
    =
    \sum_{v\in V}
    \Prob_{\boldsymbol{\theta}}(\rho\rightsquigarrow v)
\]

\State Reverse the arithmetic computation to obtain
\[
    g_v
    \gets
    \frac{\partial F}{\partial\theta_v}
    (\mathbf 1_S),
    \qquad v\in V
\]

\State Select
\(
    v^\star
    \gets
    \arg\max_{v\notin S} g_v
\)

\State $S\gets S\cup\{v^\star\}$

\EndFor

\State \Return $S$
\end{algorithmic}
\end{algorithm}

\begin{theorem}[Running time of exact greedy]
\label{thm:greedy-runtime}
Algorithm \ref{alg:exact-greedy} computes exactly the same seed set as
classical greedy influence maximization using exact marginal gains, in
\[
    O\!\left(
        K n\,2^{O(w^2)}\operatorname{poly}(w)
    \right)
\]
time.
\end{theorem}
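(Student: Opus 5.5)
The plan is to prove the two claims of Theorem~\ref{thm:greedy-runtime} separately: correctness, meaning the output equals the classical exact greedy seed set, and the running-time bound. For correctness I would argue by induction on the iteration index $i$. The invariant is that at the start of iteration $i$ the current set $S$ coincides with the first $i-1$ seeds chosen by classical greedy under the same tie-breaking rule. In iteration $i$, Algorithm~\ref{alg:exact-greedy} sets $\boldsymbol\theta=\mathbf 1_S$. The forward sweeps are exact by Lemmas~\ref{lem:message-correctness} and~\ref{lem:two-sweeps}, applied to the augmented model. Lemma~\ref{lem:source-equivalence}, together with the identity $F(\mathbf 1_S)=\sigma(S)$, identifies the computed value with $\sigma(S)$. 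By Theorem~\ref{thm:gradient-marginals}, each $g_v$ with $v\notin S$ equals $\Delta(v\mid S)$ exactly. So the argmax over $v\notin S$ selects the same vertex that classical greedy would select, provided both use a common deterministic tie-breaking rule, which I would state explicitly. This preserves the invariant, and after $K$ iterations the two seed sets coincide.

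One subtlety has to be addressed. The derivative identity concerns the true function $F$, while the algorithm differentiates the compiled arithmetic circuit. I would note that the circuit computes the polynomial $F(\boldsymbol\theta)$ identically for all $\boldsymbol\theta\in[0,1]^n$, not only at Boolean points, as long as both branches $p(1-\theta_v)$ and $p\theta_v$ are retained. This follows because Lemma~\ref{lem:message-correctness} holds for arbitrary independent edge probabilities. Since two polynomial expressions that agree on a box define the same function, reverse-mode differentiation of the circuit returns the true partial derivatives. This is the step I expect to be the main obstacle, because exactness in the paper was argued pointwise, and the pruning caveat shows that pointwise correctness at $\mathbf 1_S$ alone would not suffice.

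For the running time, the preprocessing in lines 1--4 runs once. It consists of augmenting every bag with $\rho$, assigning edge ownership, and compiling the transition maps. Each bag and separator has $2^{O(w^2)}$ states, and the combination tables cost $2^{O(w^2)}\operatorname{poly}(w)$ per tree edge, so over $O(n)$ bags the preprocessing is $O(n\,2^{O(w^2)}\operatorname{poly}(w))$. One point needs care: each artificial edge $\rho\to v$ is owned by some bag containing $v$, so a bag may own up to $w+1$ such edges. This keeps $P_x^{\mathrm{loc}}$ computable within $2^{O(w^2)}\operatorname{poly}(w)$, with width $w+1$ absorbed as in Theorem~\ref{thm:all-gains}. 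Each loop iteration consists of setting $\boldsymbol\theta$ in $O(n)$, recomputing local distributions, running the two sweeps, extracting $F$, and one reverse pass. Theorem~\ref{thm:all-gains} bounds all of this by $O(n\,2^{O(w^2)}\operatorname{poly}(w))$, and the argmax costs $O(n)$. Summing over $K$ iterations and adding the one-time preprocessing gives $O(Kn\,2^{O(w^2)}\operatorname{poly}(w))$, as claimed.
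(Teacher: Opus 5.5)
Your proposal is correct and follows the paper's route. The paper gives no separate proof; it treats the result as immediate: each of the $K$ iterations is one application of Theorem~\ref{thm:all-gains} at $\boldsymbol{\theta}=\mathbf 1_S$, and Theorem~\ref{thm:gradient-marginals} identifies each $g_v$ with $\Delta(v\mid S)$, so the argmax reproduces the classical greedy step and the cost is $K$ times the single-evaluation bound. Two of your additions make rigorous what the paper only states informally: the explicit tie-breaking convention, and the polynomial-identity argument that the fixed-structure circuit agrees with $F$ on all of $[0,1]^n$, so reverse mode returns the true partials at the boundary point $\mathbf 1_S$.
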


For fixed $K$ and $w$, the running time is
\( O(n).\)

Because IC influence is monotone and submodular, the returned solution
satisfies
\[
    \sigma(S)
    \ge
    \left(1-\frac1e\right)
    \OPT_K.
\]

\subsection{Why Not Solve Influence Maximization Exactly?}

Algorithm~\ref{alg:exact-greedy} computes the greedy solution exactly.
It does not solve globally optimal influence maximization exactly.

This distinction is unavoidable in general even on extremely simple
graph topologies.

Existing results show that IC influence maximization is NP-hard on
in-arborescences \cite{in-arborescence-IC}.
The underlying undirected graph of every arborescence is a tree, and
therefore has treewidth $1$. Furthermore, the graph in the construction of \cite{in-arborescence-IC} has its pathwidth bounded by \(2\).

\begin{corollary}
Globally optimal IC influence maximization is NP-hard even on graphs
of pathwidth at most $2$.
\end{corollary}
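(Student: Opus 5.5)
The corollary follows by combining the cited hardness result with a direct bound on the pathwidth of the reduction's instances. The plan is to take the NP-hardness reduction of \cite{in-arborescence-IC} for IC influence maximization on in-arborescences as a black box. Every instance it outputs is a directed graph whose underlying undirected graph is a tree. Membership in a class of graphs of pathwidth at most $2$ is a restriction on inputs, so hardness transfers immediately once we verify one thing: every instance produced by the reduction has underlying undirected pathwidth at most $2$. Hardness of the decision version (``is there $S$ with $|S|\le K$ and $\sigma(S)\ge\tau$?'') on this restricted class then gives hardness of global optimization.

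The key step is the pathwidth bound. The reduction is from a known NP-hard problem; I expect a set-cover or partition style gadget consisting of a central path or root together with pendant paths or stars attached along it, i.e.\ a caterpillar-like or ``lobster-like'' tree. I would write down the gadget structure explicitly. Then I would exhibit a path decomposition directly. Traverse the spine $s_1,\dots,s_m$ using bags $\{s_i,s_{i+1}\}$. Whenever a pendant subtree hangs at $s_i$, insert between these bags a sequence of bags, each consisting of $s_i$ together with a path decomposition of that pendant subtree.

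If each pendant piece is a path or a star, its own path decomposition has width $1$, i.e.\ bags of size $2$. Adding $s_i$ gives bags of size at most $3$, hence width at most $2$. Checking the three decomposition axioms is routine: each spine vertex occurs in a contiguous run of bags, and each pendant vertex lies only inside its inserted block.

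The main obstacle is confirming the exact shape of the cited construction. If its pendant pieces are deeper trees, for instance complete binary trees of growing height, the pathwidth would grow logarithmically and the argument would fail. In that case I would first try to compress such pieces without changing the optimum, e.g.\ by replacing a deep gadget with a path of equivalent activation weight. This works because in an in-arborescence influence only flows toward the root. Failing that, I would appeal to the general fact that trees of ``spine-depth'' two (every vertex within distance one of a path of caterpillars) have pathwidth at most $2$, and verify that the construction fits it.
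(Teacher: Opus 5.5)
Your plan is the same as the paper's proof: treat the Lu et al.\ reduction as a black box and exhibit an explicit width-$2$ path decomposition of every instance it outputs. The shape you leave unconfirmed is a spider: a Subset Sum reduction producing $n$ paths $v_{i,1},\dots,v_{i,L_1+L_2}$ whose last vertices attach to $g_1$, followed by the tail $g_1,\dots,g_{L_3}$. That is exactly your ``pendant paths at a spine vertex'' case, and your scheme produces the paper's bags $\{v_{i,j},v_{i,j+1},g_1\}$ followed by $\{g_j,g_{j+1}\}$, so the fallback contingencies (gadget compression, the caterpillar characterization) are never needed.
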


\begin{proof}
Lu et al.~\cite{in-arborescence-IC} prove NP-hardness of IC influence
maximization by a reduction from Subset Sum to an in-arborescence.
Their construction consists of paths
\[
v_{i,1},v_{i,2},\ldots,v_{i,L_1+L_2},
\qquad i=1,\ldots,n,
\]
whose final vertices are adjacent to a common vertex $g_1$, followed by
the path
\[
g_1,g_2,\ldots,g_{L_3}.
\]

Let $L=L_1+L_2$. A path decomposition of the constructed graph is
obtained by taking, consecutively for each $i=1,\ldots,n$, the bags
\[
\{v_{i,j},v_{i,j+1},g_1\},
\qquad
j=1,\ldots,L-1,
\]
and then appending
\[
\{g_1,g_2\},
\{g_2,g_3\},
\ldots,
\{g_{L_3-1},g_{L_3}\}.
\]

Every graph edge is contained in some bag, and the bags containing any
fixed vertex form a contiguous interval. The largest bag has size
three, so the constructed graph has pathwidth at most $2$.

Since the reduction of Lu et al. is NP-hard and all graphs produced by
the reduction have pathwidth at most $2$, globally optimal IC influence
maximization is NP-hard even on graphs of pathwidth at most $2$.
\end{proof}

Thus bounded pathwidth/treewidth makes exact probabilistic inference tractable,
but does not eliminate the combinatorial hardness of globally optimal
seed selection. This contrast is summarized in Table~\ref{tab:complexity-summary}.

Interestingly, the IM problem under the Linear Threshold diffusion model is solvable exactly in polynomial time on
in-arborescences \cite{in-arborescence-LT}. Thus, determining whether exact IM under the LT model is tractable on bounded-pathwidth or bounded-treewidth graphs remains an interesting direction for future research.

\begin{table}[t]
\centering
\small
\begin{tabular}{l l}
\toprule
\textbf{Task} & \textbf{Complexity} \\
\midrule
Exact influence evaluation
& $O(nf(w))$ \\

All exact marginal gains
& $O(nf(w))$ \\

Exact greedy, budget $K$
& $O(Knf(w))$ \\

Globally optimal influence maximization
& NP-hard for $\tw(G)=1$ \\
& NP-hard for $\pw(G)\le 2$ \\
\bottomrule
\end{tabular}
\caption{Complexity of exact IC inference and influence maximization on
graphs of bounded width. Here \(f(w)=2^{O(w^2)}\operatorname{poly}(w).\)}
\label{tab:complexity-summary}
\end{table}


\section{Probabilistic Seed Activation}
\label{sec:uncertain-seeds}

In many applications, selecting a node as a target does not guarantee
that the node becomes active. It is more realistic to consider interventions as having a probability of success.
Let \(q_v\in[0,1]\)
be the probability that targeted node \(v\) becomes active when selected for the seed set.

If $\theta_v$ represents whether $v$ is selected as a target (i.e. \(\theta_v\in \{0,1\}\)), the
artificial source edge becomes \(\rho\xrightarrow{q_v\theta_v}v.\)

The exact bounded-treewidth circuit is unchanged structurally.
Only its numerical edge weights change.

Consequently, the asymptotic running time of the treewidth algorithm is
independent of the numerical values of \(p_e\) and \(q_v\).

This contrasts with sampling-based methods, whose running times may
depend strongly on cascade sizes, estimator variance, RR-set sizes, and
the estimated optimal spread.


\section{Experiments}
\label{sec:experiments}

We conducted a series of experiments to empirically evaluate the
behavior of the proposed exact bounded-treewidth influence-maximization
algorithm. Our aim is not to provide an exhaustive empirical study,
but rather to demonstrate the practical feasibility of the method and
to investigate its scaling behavior under different diffusion regimes.

In particular, we study how the running time depends on graph size and
treewidth, how the methods respond to different propagation and seed
activation probabilities, and how the proposed exact greedy algorithm
compares with representative scalable influence-maximization
baselines on bounded-treewidth instances.

\paragraph{Baselines.}
We compare the following methods:
\begin{itemize}
\item \textbf{TW exact greedy}, the proposed method;
\item \textbf{IMM}~\cite{IMM};
\item a \textbf{directed NewGreedy (NG)/Cohen-style all-gains Monte
Carlo} method~\cite{chen2009efficient,cohen1997size}; and
\item \textbf{CELF with Monte Carlo influence estimation}
~\cite{CELF}.
\end{itemize}

For a fair comparison of solution quality, the seed set returned by
each method is evaluated afterwards using the same exact
bounded-treewidth influence evaluator. This post-hoc evaluation is not
included in the optimization running time of any method.

CELF with Monte Carlo estimation was included in preliminary
experiments, but its running time became noncompetitive already at
substantially smaller graph sizes than those reached by the remaining
methods. We therefore omit it from the large-scale results presented
below.

\subsection{Additional Benchmark Details}

All experiments were run on a Dell Precision 7710 workstation with an
Intel Core i7-6820HQ processor (8 logical CPUs), 16\,GB of RAM, and a
256.1\,GB disk, running Ubuntu 24.04.4 LTS.

For the experiments reported below, we considered treewidths
$w\in\{1,2\}$ and graph sizes ranging from $n=300$ to $n=100000$,
subject to computational feasibility. The reported diffusion and
seed-success regimes were
\[
(a,q)\in
\{(0.1,1),(0.5,1),(1,1),(1,0.1),(1,0.05)\},
\]
with seed budget $K=10$.
Not every combination of treewidth, stochastic regime, and graph size
was evaluated up to the same maximum $n$; Table~\ref{tab:tw_benchmark_headline}
therefore reports the largest completed instance available for each
setting. All recorded method runs completed with successful status, and
no method-level timeout was observed. We therefore do not attribute the
termination of the larger benchmark to any particular method; rather,
it appears to have resulted from overall hardware-resource limitations
of the experimental machine.  The underlying graphs were random $k$-trees
of exact treewidth $k=w$, converted to bidirected IC graphs.  Each
directed edge probability was sampled independently as
$p_e\sim U(0,a)$.

IMM was run with $\varepsilon=0.1$ and $\ell=1$ and without an
explicit cap on the number of RR sets.  The directed
NewGreedy/Cohen-style baseline used 500 live-edge snapshots per greedy
round and five Cohen exponential-min sketch repetitions per snapshot.
Thus, a completed $K=10$ NewGreedy run used 5000 live-edge snapshots.

Each method was given an independent wall-clock limit of 3600 seconds.
The treewidth circuit compilation had a separate timeout of 1800
seconds.  A timeout of one method did not prevent the remaining methods
from being evaluated.  Each parameter setting was run once.

For every returned seed set, solution quality was evaluated afterwards
using the same exact bounded-treewidth influence evaluator.  This
post-hoc evaluation time was excluded from the reported optimization
runtime.  For the proposed method, the reported total runtime includes
both the one-time treewidth-circuit compilation and exact greedy seed
selection, whereas the reported IMM and NewGreedy runtimes contain
seed selection only.

For the large-scale benchmark, we attempted graph sizes \(n\in\{30000,100000,300000,1000000\}\) for treewidths \(w\in\{1,2\}\). Not all requested configurations completed on the available hardware. The missing configurations were not caused by the configured per-method wall-clock timeout: no corresponding method-level timeout was recorded before termination of the benchmark process. We therefore report only configurations for which complete results were successfully produced. The termination appears to have resulted from overall computational-resource limitations of the experimental machine rather than a recorded timeout of any individual method.

Because termination occurred outside the method-level status reporting, we do not attribute the missing configurations to any particular method.

\subsection{Results}

\paragraph{Runtime Scaling.}
The experiments were consistent with the linear dependence on graph
size predicted for Algorithm~\ref{alg:exact-greedy} at fixed treewidth
and seed budget. In the more demanding diffusion regimes, our method outperformed the sampling-based baselines, by several orders of magnitude in the most extreme cases. Figure~\ref{fig:runtime-scaling} illustrates the
runtime scaling in a particularly challenging stochastic regime, with
strong diffusion ($a=1$) and a low seed activation probability
($q=0.05$).

 \begin{figure}[t]
 \centering
 \includegraphics[width=.8\linewidth]{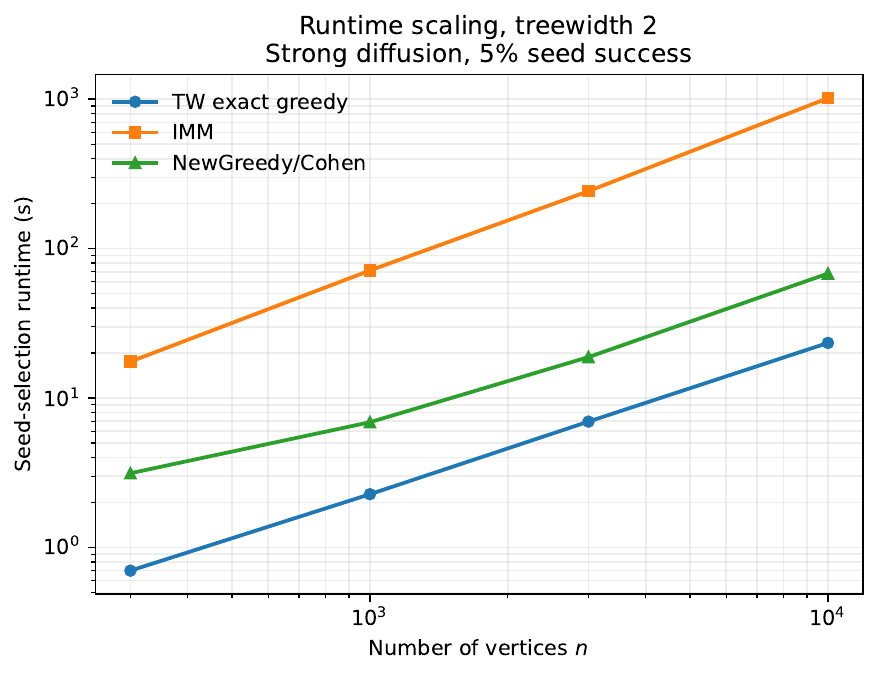}
 \caption{
 Runtime as a function of graph size for treewidth $2$ under strong
 diffusion and probabilistic seed activation.
 }
 \label{fig:runtime-scaling}
 \end{figure}

\paragraph{Robustness to Stochastic Parameters.}
Figure~\ref{fig:probability-robustness} shows that the runtime of
Algorithm~\ref{alg:exact-greedy} is largely insensitive to the
particular values of the propagation parameter $a$ and seed activation
probability $q$. This behavior is expected: these parameters change the
numerical probability masses propagated by the algorithm, but not the
underlying set of reachability states or the structure of the compiled
computation.

As further illustrated in Table~\ref{tab:tw_benchmark_headline}, the
dominant factors determining the runtime of our method are instead the
size and treewidth of the graph. In contrast, the runtimes of IMM and
the NewGreedy-style Monte Carlo method vary substantially across
stochastic regimes, since their computational requirements depend on
the sampling behavior induced by the propagation and seed activation
probabilities. This insensitivity to the numerical diffusion parameters
is therefore an important practical advantage of the exact
bounded-treewidth approach.

 \begin{figure}[t]
 \centering
 \includegraphics[width=.8\linewidth]{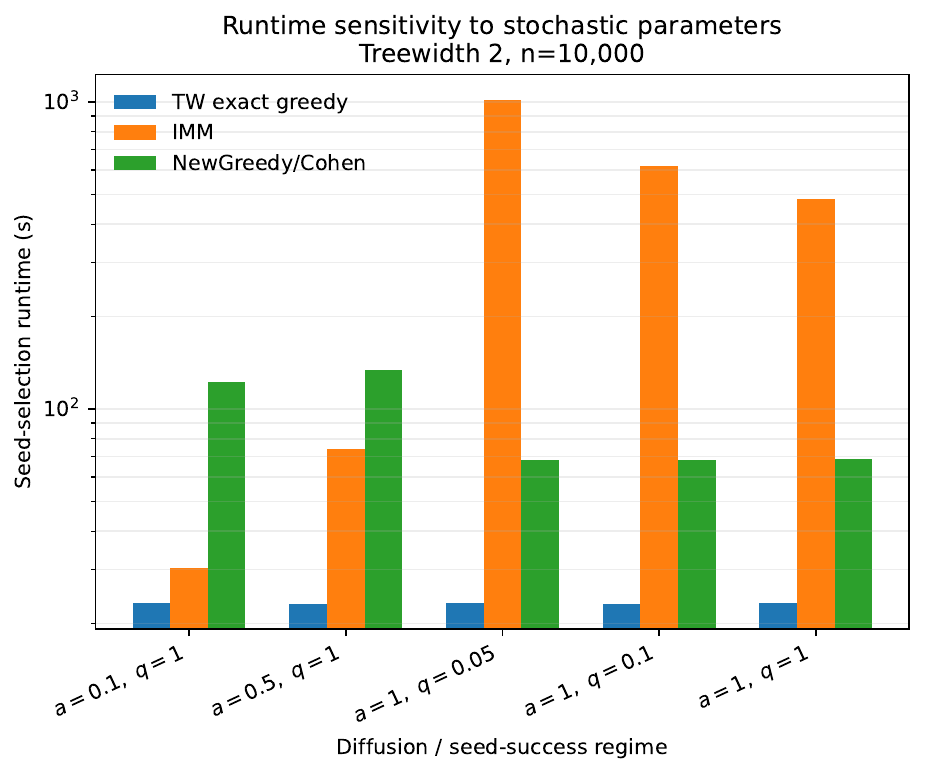}
 \caption{
 Runtime across propagation and seed-activation regimes.
 }
 \label{fig:probability-robustness}
 \end{figure}

\paragraph{Solution Quality.}
Table~\ref{tab:tw_benchmark_headline} shows that all evaluated methods
generally produced seed sets with very similar exact influence spread.
Algorithm~\ref{alg:exact-greedy} most consistently achieved the highest
observed spread. Unlike the sampling-based baselines, it computes every
greedy marginal gain exactly and therefore reproduces the exact
classical greedy trajectory without sampling error.

This does not imply that exact greedy always returns the globally
optimal seed set: an approximate method may occasionally select a
different sequence of vertices and obtain a seed set with slightly
higher final spread. Nevertheless, the results indicate that the
sampling-based methods require substantially greater computational
effort to attain solution quality close to that of exact greedy on the
bounded-treewidth instances considered here.

\begin{table}
\caption{Runtime and solution quality at the largest available graph size for each experimental regime. The runtime is measured in seconds (s), and the solution quality is measured in percentage of the best found solution of all baselines (\%).}
\small
\label{tab:tw_benchmark_headline}
\begin{tabular}{rrrrrrrrrr}
\toprule
w & n & a & q & TW (s) & IMM (s) & NG (s) & TW (\%) & IMM (\%) & NG (\%) \\
\midrule
1 & 100000 & 0.10 & 1.00 & 154.54 & 659.36 & 1502.84 & 100.00 & 99.98 & 94.28 \\
1 & 10000 & 0.50 & 1.00 & 14.24 & 17.08 & 130.01 & 100.00 & 100.00 & 97.34 \\
1 & 100000 & 1.00 & 0.05 & 153.65 & 608.95 & 1643.90 & 100.00 & 99.29 & 100.00 \\
1 & 10000 & 1.00 & 0.10 & 14.07 & 44.63 & 118.47 & 100.00 & 99.86 & 99.76 \\
1 & 100000 & 1.00 & 1.00 & 153.17 & 91.13 & 1596.28 & 100.00 & 99.66 & 97.13 \\
2 & 10000 & 0.10 & 1.00 & 23.26 & 30.28 & 122.03 & 100.00 & 99.81 & 99.52 \\
2 & 10000 & 0.50 & 1.00 & 23.22 & 74.24 & 134.45 & 98.99 & 100.00 & 96.38 \\
2 & 10000 & 1.00 & 0.05 & 23.38 & 1016.79 & 68.08 & 100.00 & 99.08 & 99.83 \\
2 & 10000 & 1.00 & 0.10 & 23.24 & 617.10 & 68.07 & 100.00 & 99.09 & 99.46 \\
2 & 10000 & 1.00 & 1.00 & 23.34 & 483.93 & 68.85 & 100.00 & 99.90 & 98.59 \\
\bottomrule
\end{tabular}
\end{table}


\section{Discussion and Conclusion}
\label{sec:discussion-and-conclusion}

We studied Independent Cascade influence maximization on graphs of
bounded treewidth. Given a width-$w$ tree decomposition, exact IC
influence evaluation can be performed in

$$
    O\!\left(
        n\,2^{O(w^2)}\operatorname{poly}(w)
    \right)
$$

time by propagating distributions over separator reachability
relations.

By introducing variable artificial source edges, we further showed
that every greedy marginal gain can be expressed as a partial
derivative of a single multilinear influence function. Reverse-mode
differentiation through the exact inference computation therefore
produces all node marginal gains simultaneously, within the same
asymptotic complexity as one exact influence evaluation.

This yields an exact implementation of classical greedy influence
maximization with running time

$$
    O\!\left(
        K n\,2^{O(w^2)}\operatorname{poly}(w)
    \right).
$$

Consequently, for fixed treewidth and fixed seed budget, exact greedy
influence maximization is linear in the size of the graph. Importantly,
``exact'' here refers to the greedy marginal gains and the resulting
classical greedy trajectory, rather than to globally optimal influence
maximization.

This tractability stands in contrast to the hardness of the underlying
combinatorial optimization problem: globally optimal influence
maximization remains NP-hard already on graphs of treewidth one and
pathwidth two. The results therefore highlight a sharp distinction
between exact probabilistic inference and exact combinatorial
optimization on structurally simple graphs.

A useful feature of the proposed approach is that it shifts the main
computational difficulty from stochastic sampling to structural
complexity. Once the graph topology, tree decomposition, and
reachability-state transitions have been fixed, changing the numerical
edge or seed-activation probabilities changes only the propagated
probability masses, not the combinatorial state space. This is reflected
in our experiments, where the runtime of the exact treewidth method was
largely insensitive to the stochastic parameters, while the
sampling-based baselines varied substantially across diffusion regimes.
The experiments were also consistent with the predicted linear scaling
in graph size for fixed width.

The main limitation is the dependence on treewidth. Although in
practice only reachable relation states need to be materialized, the
number of distinguishable separator behaviors can be
\(
    2^{\Theta(w^2)}
\)
in the worst case within the context-independent compositional
separator-summary framework considered here. The method can therefore
become impractical rapidly as the treewidth increases. Its natural
domain is consequently graphs of very small treewidth, particularly in
regimes where exactness is valuable or where sampling-based methods
become expensive.

Our present experiments use synthetic graphs with controlled
treewidth, allowing the structural dependence of the algorithm to be
isolated cleanly. An important direction for future work is to identify
natural networks, applications, or decomposition-based subproblems with
sufficiently small effective width for exact inference to remain
practical. More generally, reducing the practical dependence on
treewidth, for example through improved state compression or hybrid
exact--approximate methods, remains an interesting direction.


\section{Future Work}
\label{sec:future}

Several directions appear promising.

\paragraph{Approximate separator inference.}
The exact algorithm propagates complete distributions over separator
reachability relations, which becomes expensive as the treewidth
increases. A natural approximation is to restrict the amount of
information propagated between bags, for example by retaining only a
selected subset of relation states or by truncating propagation beyond
a bounded number of decomposition-tree hops. Such methods could
interpolate between exact bounded-treewidth inference and inexpensive
local approximations, while potentially retaining much of the structural
information captured by the exact method.

\paragraph{Other diffusion models.}
The live-graph interpretation suggests that the framework may extend
beyond the Independent Cascade model. The Linear Threshold model is
particularly interesting. Under its triggering-set representation, the
diffusion process can again be expressed through random reachability,
but the possible incoming triggering edges of a vertex are locally
correlated rather than independent. Extending the separator recursion
to accommodate these local dependencies therefore appears feasible but
requires additional state or factor information.

The Linear Threshold model is also interesting from an optimization
perspective. Influence maximization on in-arborescences exhibits a
sharp complexity distinction between the IC and LT models: the problem
is tractable under LT while remaining hard under IC. This motivates the
question of whether globally optimal LT influence maximization admits
polynomial-time algorithms on graphs of bounded treewidth or bounded
pathwidth.

\paragraph{Continuous optimization.}
The variable-source construction does more than provide the discrete
greedy marginal gains. The function
\(
    F(\boldsymbol{\theta})
\)
is the multilinear extension associated with independent probabilistic
seed selection, and the proposed inference procedure provides exact
values and gradients of this function. This suggests using the same
machinery as an exact gradient oracle inside continuous-greedy or other
continuous optimization procedures. Investigating whether this leads
to useful algorithms in practice, and whether the structural setting
permits stronger guarantees or more efficient implementations, is a
natural direction for further work.

\paragraph{Faster maximum-marginal selection.}
The current algorithm computes the complete gradient and therefore all
$n$ marginal gains in every greedy iteration. While this is optimal if
all gains are explicitly required, greedy influence maximization needs
only the largest one. It is therefore natural to ask whether
\(
    \arg\max_{v\notin S}\Delta(v\mid S)
\)
can be identified in
\(
    o(nf(w))
\)
time without materializing the complete gradient. This question is
particularly interesting in combination with lazy greedy techniques
such as CELF, where previously computed marginal gains provide upper
bounds after the seed set grows. More generally, it may be possible to
exploit the fact that consecutive greedy iterations differ by only one
new seed and update the compiled inference computation incrementally.

\paragraph{Compact separator representations.}
The lower bound of
\(
    2^{\Omega(w^2)}
\)
applies to exact context-independent compositional summaries that must
preserve arbitrary external reachability contexts. Restricted graph
families may admit substantially smaller state spaces, and practical
instances may realize only a small fraction of all theoretically
possible separator relations. Identifying structural conditions under
which stronger compression is possible is therefore of both theoretical
and practical interest. A broader open question is whether exact
bounded-treewidth IC inference can avoid explicit separator-relation
states altogether by using a fundamentally different, possibly more
global, representation.


\bibliographystyle{plain}
\bibliography{references}

\end{document}